\newcommand{\sbr}[1]{\llbracket #1 \rrbracket}
\newcommand{\centerpic}[2]{
        \begin{center}
                \includegraphics[scale={#1}]{{#2}}
        \end{center}
        }
\newtheorem{theorem}{Theorem}
\newtheorem{proposition}[theorem]{Proposition}
\newtheorem{lemma}[theorem]{Lemma}
\newtheorem{example}{Example}
\begin{document}

\title{A Structural and Nominal Syntax for Diagrams}
\author{
	{Dan R. Ghica}
	\institute{University of Birmingham}
	\and
	Aliaume Lopez
	\institute{ENS Cachan, Universit\'e Paris-Saclay}
}
\def\titlerunning{A Structural and Nominal Syntax for Diagrams}
\def\authorrunning{D. R. Ghica and A. Lopez}
\maketitle

\begin{abstract}
The correspondence between monoidal categories and graphical languages of diagrams has been studied extensively, leading to applications in quantum computing and communication, systems theory, circuit design and more. From the categorical perspective, diagrams can be specified using (name-free) combinators which enjoy elegant equational properties. However, conventional notations for diagrammatic structures, such as hardware description languages (VHDL, \textsc{Verilog}) or graph languages (\textsc{Dot}), use a  different style, which is flat, relational, and reliant on extensive use of names (labels). Such languages are not known to enjoy nice syntactic equational properties. However, since they make it relatively easy to specify (and modify) arbitrary diagrammatic structures they are more popular than the combinator style. In this paper we show how the two approaches to diagram syntax can be reconciled and unified in a way that does not change the semantics and the existing equational theory. Additionally, we give sound and complete equational theories for the combined syntax. 
\end{abstract}

\section{Specifying graphs}

Graphs and their visual representations (diagrams) are an appealing way of describing many kinds of systems, in particular circuits. Work originated in the study of quantum computation~\cite{DBLP:conf/lics/AbramskyC04} has exploited the connection between various classes of graphs and monoidal categories, going back to the seminal work of Joyal and Street~\cite{joyal1991geometry}, to add a layer of structure which makes reasoning about and with diagrams not just intuitive but also mathematically rigorous. Subsequently this connection was extended in many surprising and interesting directions, from computational linguistics~\cite{DBLP:journals/logcom/SadrzadehCC13}, to modelling signal flow~\cite{DBLP:conf/popl/BonchiSZ15} and  synchronous~\cite{GhicaJung16} or asynchronous~\cite{ghica2013diagrammatic} circuits. New automated reasoning tools based on diagrams, rather than the usual linear algebraic syntax, are a particularly exciting development (see {\url{http://globular.science/}}).  This convenient and elegant interplay between the dual categorical and diagrammatic methods are by now a mature and rich area of research~\cite{selinger2010survey}.

Although these developments convincingly establish the usefulness of categorical diagrammatics in {reasoning} about many kinds of systems, we note that little has been suggested in terms of a workable syntax which is conceptually compatible with it, but also with conventional notations, which are unstructured and relational. Examples of the latter are hardware description languages such as \textsc{Verilog} and VHDL, or graph languages such as \textsc{Dot} (see {\url{http://www.graphviz.org/}}). Indeed, in the literature the categorical combinators are usually taken as an implicit diagram syntax. Whereas such a syntax is expressive enough to describe the desired classes of graphs, it is often not as convenient as the alternatives. Although categorical combinators can be elegant and succinct in certain situation, \textit{point-free} languages of combinators generally hold little appeal as concrete syntax (e.g. APL).

Conventional diagram syntax lacks structure. It is a ``flat'' relational description of the graph. Whereas the categorical combinators are name-free, conventional syntax uses an abundance of names, for nodes and sometimes even for edges. Although these languages are widely used, they are broadly considered both inelegant and unwieldy, and purely structural alternatives have been proposed~\cite{DBLP:conf/icfp/BjesseCSS98}. 

Before diving into technicalities let us consider a simple motivating example. Consider the diagram below, consisting of two components $f$ and $g$, the first with one input and three outputs and second with three inputs and one output, connected in the obvious way:

\centerpic{1}{fg}

In the categorical, combinator-style, specification such a diagram would be succinctly written as the \textit{composition} $f;g: 1\rightarrow 1$, of $f:1\rightarrow 3$ and $g:3\rightarrow 1$. A VHDL-style description would look more verbose:
\begin{Verbatim}
module fg(input u; output v)
begin
  component f(input x1; output y1, y2, y3);
  component g(input z1, z2, z3; output t1);
  wire y1, z1; wire y2, z2; wire y3, z3; wire u,  x1; wire t1, v;
end 
\end{Verbatim}
This is the diagram annotated with all the names used above:
\centerpic{1}{fgnom}

In highly structured diagrams the categorical notation is concise and elegant. However, realistic circuits may also use arbitrary connections. Consider a variation of the circuit above:
\centerpic{1}{fg2}
The flat description can be adjusted to cope with arbitrary connector reassignments:
\begin{Verbatim}
module fg(input u1, u2; output v1, v2)
begin
  component f(input x1; output y1, y2, y3);
  component g(input z1, z2, z3; output t1);
  wire y1, z1; wire y3, z3; wire y2, v2; wire u2, z2; wire u1, x1; wire t1, v1;
end 
\end{Verbatim}

The categorical style description is now more intricate, requiring a mix of sequential ($;$) and parallel (\textit{tensor}, $\otimes$) composition along with the use of \textit{structural} combinators such as the single connector (\textit{identity}, $i$) and crossing connector (\textit{symmetry}, $\gamma$): 
$
(f\otimes i);(i\otimes \gamma \otimes i); (i\otimes i\otimes \gamma);(i\otimes \gamma \otimes i);(g\otimes i)
.$

The combinator-style variable-free syntax is appealing for highly structured diagrams and awkward for arbitrary ones, whereas the flat and unstructured syntax seems useful in the case of unstructured and unnecessarily verbose for structured graphs. 
Without advocating one style or the other, in this paper we  present a syntax which shows that the two  are compatible. Our contributions are therefore as follows:
\begin{enumerate}
	\item Preserving and extending the equational theory.
	\item Possibly eliminating the need for  structural combinators.
	\item Preserving and conserving the expressiveness of the underlying diagrams.
	\item Reasoning equationally in the new syntax.
\end{enumerate}

\section{Uniflow diagrams}

A PROP (abbreviation of \textit{products and permutations}) is a strict symmetric monoidal category where every object is a natural number~\cite{Hackney2015}. We give a graph semantics of PROPs based on Kissinger's \textit{framed point graphs}~\cite{Kissinger}. 
Let a labelled directed acyclic graph (DAG) be a DAG $(V, E)$ equipped with a partial injection $f : V \rightharpoonup L$ and a relation $E \subseteq V^2$ such that the transitive and reflexive closure of $E$ is a partial order on $V$. Let a labelled interfaced DAG (LIDAG) be a labelled DAG with two distinguished lists of unlabelled nodes representing the ``input'' and ``output'' ports. Unlabelled nodes are called \textit{wire nodes}, and edges connecting them are called \textit{wires}. 
A wire homeomorphism~\cite[Sec.~5.2.1]{Kissinger} is any insertion or removal of wire nodes along wires which does not otherwise change the shape of the graph. 
Two LIDAGs are considered to be equivalent if they are graph isomorphic up to renaming vertices and wire homeomorphisms. The quotienting of LIDAGs by this equivalence gives us \textit{framed point DAGs}~\cite[Def. 5.3.1]{Kissinger}, which we will call \textit{uniflow diagrams}.
We give a syntax for uniflow diagram as follows:
\[
M::= k \mid i \mid \gamma \mid M;M \mid M\otimes M \mid x \mid \overline{xy}.M,
\]
where $k\in K$ are the constants and $x$ variables. 
The language is essentially that of symmetrical monoidal categorical combinators (identity, symmetry, composition, tensor) over a signature, extended with variables and a binding construct we read as \textit{link $x$ and $y$ in $M$}. 
We equip this language with a type system, where judgements are of the form 
$\Gamma \mid \Delta \vdash M : m \rightarrow n\mid{R}. $
$\Gamma$ is a set of \textit{input variables}, $\Delta$ of \textit{output variables} and $R$ a partial order on their (disjoint) union, called the \textit{anchor} of the diagram (Fig.~\ref{fig:udtr}). The anchor relation is a syntactic discipline used to prevent the inadvertent introduction of cycles. Note that $i:1\rightarrow 1$ and $\gamma:2\rightarrow 2$ have the same typing rules as the constants.  The type $m\rightarrow n$ represents a uniflow diagram with $m$ (unlabelled) inputs and $n$ (unlabelled) outputs. Given a relation $R$ we denote by $[R]$  its transitive and reflexive closure. For any set $S$, we define 
$R\setminus S=\{(x,y)\in R\mid x,y\not\in S\}.$
\begin{figure*}
	\small
	\centering
	\AxiomC{}
	\UnaryInfC{$x \mid - \vdash x : 0\rightarrow 1\mid x\leq x $} 
	\DisplayProof\quad
	\AxiomC{}
	\UnaryInfC{$-\mid x  \vdash x : 1\rightarrow 0\mid x\leq x$} 
	\DisplayProof
	\quad
	\AxiomC{}
	\UnaryInfC{$- \mid - \vdash k : m\rightarrow n\mid\emptyset$} 
	\DisplayProof
	\\[1.5ex]
	\AxiomC{$\Gamma \mid \Delta \vdash M: m_1\rightarrow m_2\mid {\leq}$}	
	\AxiomC{$\Gamma' \mid \Delta' \vdash N: n_1\rightarrow n_2\mid {\leq'}$}
	\RightLabel{$m_2=n_1$}
	\BinaryInfC{$\Gamma\uplus\Gamma' \mid \Delta\uplus\Delta' \vdash M;N : m_1\rightarrow n_2\mid [{\leq}\cup{\leq'}\cup((\Gamma\uplus\Delta)\times(\Gamma'\uplus\Delta')]$}
	\DisplayProof
	\\[1.5ex]	
	\AxiomC{$\Gamma \mid \Delta \vdash M: m_1\rightarrow m_2\mid {\leq}$}	
	\AxiomC{$\Gamma' \mid \Delta' \vdash N: n_1\rightarrow n_2\mid {\leq'}$}
	\BinaryInfC{$\Gamma\uplus\Gamma' \mid \Delta\uplus\Delta' \vdash M\otimes N : m_1+n_1\rightarrow m_2+n_2\mid [{\leq}\cup{\leq'}]$}
	\DisplayProof
	\\[1.5ex]	
	\AxiomC{$\Gamma\mid\Delta\vdash M:m_1\rightarrow m_2\mid {\leq}$}
	\AxiomC{$[{\leq}\cup\{(x,y)\}]$ is a partial order}
	\RightLabel{$x,y\in\Gamma\uplus\Delta$}
	\BinaryInfC{$\Gamma\mid\Delta\vdash M:m_1\rightarrow m_2\mid {[{\leq}\cup\{(x,y)\}]} $}
	\DisplayProof
	\\[1.5ex]
	\AxiomC{$\Gamma,x \mid\Delta, y\vdash M: m_1\rightarrow m_2\mid {\leq}$}	
	\AxiomC{$x\leq y$}
	\BinaryInfC{$\Gamma\mid\Delta  \vdash \overline{xy}.M:m_1\rightarrow m_2\mid {{\leq}\setminus\{x,y\}}$}
	\DisplayProof
	\caption{Uniflow diagrams typing rules}
	\label{fig:udtr}
\end{figure*}

We give a {concrete graph-theoretic semantics} of the uniflow diagram. 
We may write $\Gamma\mid\Delta\vdash M$ if $\Gamma\mid\Delta\vdash M:m\rightarrow n\mid R$ for some $m,n\in\mathbb{N}$ and some partial order $R$.
The semantics is defined by induction on the typing derivation. The meaning of a judgement $\Gamma\mid\Delta\vdash M:m\rightarrow n\mid {R}$ is a uniflow diagram $\bigl(V,I, O, E, f:V\rightharpoonup K\uplus\mathbb{A}\bigr)$ such that $\mathbb A\supseteq\Gamma,\Delta$ is a set of names and the relation $R$ is a partial order on the vertices labelled by variables $\Gamma\uplus\Delta$ which is compatible with $E$, i.e. if $(x, y)\in R$ then there exist unique vertices $u,v$ such that $f(u)=x, f(v)=y$ and \textit{there is no path} in $E$ from $v$ back to~$u$. We write this as
$
\sbr{\Gamma\mid\Delta\vdash M:m\rightarrow n\mid R}
=(V,I, O, E, f)\asymp R.
$ 
The structural combinators are interpreted as in Fig.~\ref{fig:sceq}.

\begin{figure*} 
	\small
\begin{align*}
\sbr{x\mid-\vdash x:0\rightarrow 1\mid x\leq x} 
= (\{a,b\},nil, b::nil, \{(a,b)\}, \{(a,x)\})
\asymp \{(a,a)\}
=\raisebox{-2ex}{\includegraphics[]{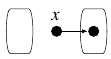}}\\
\sbr{-\mid x\vdash x:1\rightarrow 0\mid x\leq x} 
= (\{a,b\},a::nil, nil, \{(a,b)\}, \{(b,x)\})
\asymp\{(b,b)\}
=\raisebox{-2ex}{\includegraphics[]{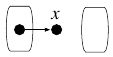}}\\
\sbr{-\mid -\vdash i:1\rightarrow 1\mid \emptyset} = (\{a,b\},a::nil, b::nil, \{(a,b)\}, \emptyset)
\asymp\emptyset
=\raisebox{-2ex}{\includegraphics[]{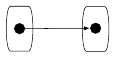}}\\
\sbr{-\mid -\vdash \gamma:2\rightarrow 2\mid \emptyset} = (\{a,a',b,b'\},a::a'::nil, b::b'::nil, \{(a,b'),(a',b)\}, \emptyset)\asymp \emptyset
=\raisebox{-3ex}{\includegraphics[]{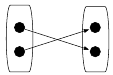}}
\end{align*}
\caption{Interpreting structural combinators}
\label{fig:sceq}
\end{figure*}
Let $::$ stand for list concatenation or cons-ing. And let the ``zipping'' of two lists be defined in the usual way (${zip}\;nil\;nil=nil$, and ${zip}\,(x::xs)\,(y::ys)=(x,y)::({zip}\;xs\;ys)$). If unambiguous we may use a list to also mean the set of elements of that list. 

Assuming that  
$\sbr{\Gamma_i \mid \Delta_i \vdash M_i: m_i\rightarrow n_i\mid {R_i}}= (V_i,I_i,O_i, E_i, f_i)\asymp R_i, \text{ for } (i=1,2),
$
The interpretation of the two forms of composition is: 
\begin{align*}
&\sbr{\Gamma_1\uplus\Gamma_2 \mid \Delta_1\uplus\Delta_2 \vdash M;N : m_1\rightarrow n_2 \mid {[{R_1}\cup{R_2}\cup(\Gamma_1\uplus\Delta_1)\times(\Gamma_2\uplus\Delta_2)]}}\\
&\qquad\qquad =(V_1\uplus V_2, I_1, O_2, E_1\cup E_2\cup ({zip}\;O_1\; I_2),f_1\cup f_2)
\asymp [{R_1}\cup{R_2}\cup(\Gamma_1\uplus\Delta_1)\times(\Gamma_2\uplus\Delta_2)]\\
&\sbr{\Gamma_1\uplus\Gamma_2 \mid \Delta_1\uplus\Delta_2 \vdash M\otimes N : m_1+m_2\rightarrow n_1+n_2 \mid {[{R_1}\cup{R_2}]}}\\
&\qquad\qquad=(V_1\uplus V_2, I_1:: I_2, O_1:: O_2, E_1\cup E_2, f_1\cup f_2)
\asymp{[{R_1}\cup{R_2}]}.
\end{align*}
Note that since diagrams are defined up to relabelling nodes, these definitions are always good. In case of name clashes nodes can be given fresh names.

Constants are interpreted as below, with $p_i, q_j$ labels for the input and output ports of the constant.
\begin{equation*}
\sbr{-\mid-\vdash k:m\rightarrow n\mid \emptyset} =\raisebox{-5ex}{\includegraphics[]{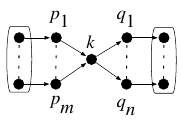}}
\end{equation*}

The rule for strengthening the anchor is interpreted as
\begin{equation*}
\sbr{\Gamma\mid\Delta\vdash M:m\rightarrow n\mid {[{R}\cup\{(x,y)\}]}} =
(V,I,O, E, f)
\asymp[{R}\cup(f^{-1}(x),f^{-1}{(y)})].
\end{equation*}
The restriction on $[{R}\cup(f^{-1}(x),f^{-1}{(y)})]$ being a proper partial order ensures that the interpretation of the rule is well defined. 

Let $(f\setminus B')(x)=f(x)$ if $f(x)\not\in B'$ and undefined otherwise.
The new construct is the \textit{link}:
\begin{equation}
\sbr{\Gamma\mid\Delta  \vdash\overline{xy}.M:m\rightarrow n\mid {{R}\setminus\{x,y\}} } =
(V,I,O, E\cup(f^{-1}(x),f^{-1}{(y)}), f\setminus\{x,y\})
\asymp{R}\setminus\{(x,y)\}.
\label{eqn:link}
\end{equation}
The link can be visualised as below, including the wire homeomorphism:
\centerpic{1}{link}
\begin{lemma}[Soundness]
	Any well typed term denotes a valid uniflow diagram. 
\end{lemma}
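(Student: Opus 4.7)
The plan is to proceed by structural induction on the typing derivation of $\Gamma \mid \Delta \vdash M : m \to n \mid R$, maintaining at every step all the invariants that jointly constitute validity of a uniflow diagram as set out just above: $(V,E)$ is acyclic (so that the transitive-reflexive closure of $E$ is a partial order on $V$), the labelling $f$ is a partial injection, the interface lists $I$ and $O$ consist of unlabelled vertices of the correct lengths $m$ and $n$, the relation $R$ is a partial order on the variable-labelled vertices $\Gamma \uplus \Delta$, and finally $R$ is compatible with $E$ in the sense that $(x,y) \in R$ forbids any $E$-path from $f^{-1}(y)$ back to $f^{-1}(x)$.

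The four base cases (the two variable introductions, $i$, $\gamma$) together with the constants $k$ are verified by direct inspection of the concrete diagrams exhibited in Fig.~\ref{fig:sceq}: each produces a tiny DAG with the prescribed interface nodes and a reflexive or empty anchor whose compatibility with $E$ is trivial. Tensor composition is the disjoint union of both the underlying LIDAGs and their anchors, so acyclicity, partial injectivity and compatibility are immediate from the two inductive hypotheses. Sequential composition is the key structural case: the only edges added in $E_1 \cup E_2 \cup (O_1 \times I_2)$ flow from $V_1$ into $V_2$, so no cycle can be created; and the enlargement of the anchor by $(\Gamma_1 \uplus \Delta_1) \times (\Gamma_2 \uplus \Delta_2)$ is compatible with the new $E$ precisely because no edge, and hence no path, ever runs from $V_2$ back into $V_1$.

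Anchor strengthening leaves the graph untouched, and its typing-rule side condition that $[R \cup \{(x,y)\}]$ is a partial order is exactly what is needed to preserve the partial-order invariant on $R$; compatibility with $E$ for the freshly added pair must then be derived from the inductive compatibility of $R$ together with the partial-order condition, which together rule out any $E$-path that would go against the enlarged ordering.

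The case where the anchor really earns its keep, and which I expect to be the main obstacle, is the \emph{link} rule. Here the semantics adds a fresh edge $(f^{-1}(x), f^{-1}(y))$ to $E$ and simultaneously strips the labels $x$ and $y$. Preservation of acyclicity requires the absence of an $E$-path from $f^{-1}(y)$ back to $f^{-1}(x)$, and this is delivered precisely by the premise $x \leq y$ of the typing rule via the inductively established compatibility of $R$ with $E$. The new anchor $R \setminus \{x,y\}$ is a partial order as a restriction of one, and for every surviving pair $(u,v) \in R \setminus \{x,y\}$ compatibility with the enlarged edge set is checked by case-analysis on whether a putative $E$-path from $f^{-1}(v)$ to $f^{-1}(u)$ uses the new edge; any such path decomposes through $f^{-1}(x)$ and $f^{-1}(y)$ and must be reduced to a contradiction with the original compatibility of $R$ with $E$. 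This last reduction, together with the careful interplay between anchor strengthening and link, is the only part of the induction that is not pure bookkeeping.
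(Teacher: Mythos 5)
Your proposal is correct and follows exactly the route the paper takes: the paper's proof is simply ``Immediate, by induction on derivation,'' and your case analysis (with the link case discharged by the compatibility of the anchor $R$ with $E$ via the premise $x\leq y$) is precisely the bookkeeping that one-liner leaves implicit.
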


The useful connection between diagrams and monoidal categories is given by this proposition:
\begin{proposition}\label{prop:diagcat}
	Given a term $\Gamma\mid\Delta \vdash M:m\rightarrow n$, the diagram $\sbr{\Gamma\mid\Delta \vdash M:m\rightarrow n}$ is a morphism in the strict free symmetric monoidal category (SMC) over signature $K\cup \Gamma\cup\Delta $ of constants and (uninterpreted) variables. 
\end{proposition}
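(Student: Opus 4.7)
The plan is to proceed by structural induction on the typing derivation of $\Gamma\mid\Delta\vdash M:m\rightarrow n$, verifying at each rule that the denotation $\sbr{\cdot}$ is a morphism of the strict free SMC over $K\cup\Gamma\cup\Delta$, where each constant $k\in K$ is a generator of its declared arity, each input variable $x\in\Gamma$ is a generator of arity $0\rightarrow 1$, and each output variable $y\in\Delta$ is a generator of arity $1\rightarrow 0$. Throughout I would appeal, without re-proving it, to the by-now-standard correspondence (implicit in Kissinger's framed-point-DAG presentation) between LIDAGs modulo wire homeomorphism and morphisms in a free PROP generated by a given signature.

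The base cases are immediate: the axioms for input and output variables, for $i$, for $\gamma$, and (separately) for constants $k:m\to n$ each denote primitive morphisms of the free SMC of the required arity. The sequential-composition and tensor rules are handled by applying the induction hypothesis to each subterm and then invoking, respectively, categorical composition and the monoidal product; the denotational clauses already given for these rules coincide on the nose with the corresponding SMC operations on morphisms (as is evident from Fig.~\ref{fig:comp}). The anchor-strengthening rule leaves the underlying diagram untouched, modifying only the partial order~$R$, so there is nothing to verify for it.

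The only genuinely delicate case is the link rule. The induction hypothesis supplies a morphism $\sbr{M}$ living in the free SMC over the \emph{larger} signature $K\cup\Gamma\cup\{x\}\cup\Delta\cup\{y\}$, whereas the conclusion must be a morphism in the free SMC over $K\cup\Gamma\cup\Delta$. The key observation is that the side condition $x\leq y$ in the premise, together with the typing invariant that $R$ be a partial order compatible with $E$, rules out any path in $\sbr{M}$ from the $y$-labelled node back to the $x$-labelled node, so that the edge added by the link operation does not create a cycle. Using this acyclicity, together with naturality of $\gamma$ and the interchange law, one may rearrange $\sbr{M}$ up to SMC equations so that the two generators $x$ and $y$ are exposed at the boundary and joined by a single internal wire; the link clause then amounts, up to wire homeomorphism, to replacing the pair $x,y$ by an identity wire threaded through that rearrangement, producing a morphism with no mention of $x$ or $y$ and hence living in the smaller free SMC.

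The hard part is precisely the formal bookkeeping in this link case: showing that the required rearrangement always exists in the SMC, that the resulting wire can be fitted coherently among the surrounding generators, and that the result is independent of which topological ordering of the diagram one has chosen. These are all essentially graph-theoretic claims, and they reduce to routine manipulations once the diagram--morphism correspondence is in place; the other cases are purely clerical.
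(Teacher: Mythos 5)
Your proof is correct in substance, but it takes a considerably more laborious route than the paper, which in fact states this proposition without any explicit proof because it regards it as essentially immediate. The paper's (implicit) argument is: by the Soundness lemma every well-typed term denotes a valid uniflow diagram, i.e.\ a framed point DAG whose labelled nodes carry labels drawn from $K\cup\Gamma\cup\Delta$ (bound variables having had their labels erased by \emph{link}); and by Kissinger's characterisation the framed point DAGs over a signature \emph{are} the morphisms of the free strict SMC over that signature. Since you yourself invoke exactly this correspondence at the outset, the structural induction that follows is largely redundant, and in particular the elaborate treatment of the link case --- rearranging the diagram by naturality of $\gamma$ and interchange so that the $x$- and $y$-labelled generators sit adjacent at a boundary, then threading an identity wire through --- is not needed for this proposition. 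That rearrangement is precisely the argument the paper defers to the completeness proof of Thm.~\ref{thm:unisc} and to the existence of the link-free form $\tilde M$; importing it here makes you prove a strictly stronger statement (expressibility of the linked diagram as a combinator term) than what is asked (membership of the diagram in the free SMC, read concretely as a class of graphs). The one observation in your link case that genuinely does work is the acyclicity argument via the side condition $x\leq y$ and compatibility of the anchor with $E$, which guarantees the added edge yields a DAG; combined with the signature bookkeeping (the erased labels $x,y$ no longer appear, so the result lives over $K\cup\Gamma\cup\Delta$), that alone closes the case. Your worry about independence from the chosen topological ordering is moot for the same reason: the denotation is defined directly by a graph operation, so the rearrangement is only a proof device and nothing depends on it.
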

By ``uninterpreted variables'' we mean simply that the variables from $\Gamma,\Delta$ are morphism variables in the categorical signature with types $0\rightarrow 1$ and $1\rightarrow 0$ respectively. 
This is just Thm.~3.12 from~\cite{selinger2010survey}, which is restating the classic result from~\cite{joyal1991geometry}. For frame-point graphs this is Thm.~5.5.10 in~\cite{Kissinger}.

Note that this does not make the semantics categorical, since the \textit{link} construction is not defined categorically but only concretely. However, any term (with our without link) corresponds to a diagram which can be described categorically.
An immediate consequence of Prop.~\ref{prop:diagcat} is that all derivations of a well typed term produce the same uniflow diagram, possibly with a different anchoring relation. 
\begin{lemma}
The type system of uniflow diagrams is coherent up to the anchoring relation. 
\end{lemma}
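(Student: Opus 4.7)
The plan is to prove by induction on $M$ that whenever both $\Gamma \mid \Delta \vdash M : m \to n \mid R$ and $\Gamma' \mid \Delta' \vdash M : m' \to n' \mid R'$ are derivable, one necessarily has $\Gamma = \Gamma'$, $\Delta = \Delta'$, $m = m'$, and $n = n'$, while the anchors $R$ and $R'$ may genuinely differ. The strategy is to isolate the single source of non-determinism in Fig.~\ref{fig:udtr}, namely the strengthening rule, and to observe that it touches only the anchor component of a judgement.

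First I would establish that every rule other than strengthening is syntax-directed in the strong sense that its conclusion's contexts and interface counts are explicit functions of those of its premises (or, in the axiom cases $x$, $i$, $\gamma$, are fixed by the rule itself); the link rule, in particular, simply removes a specific pair from the contexts. By induction this forces the non-anchor components of the judgement to be uniquely determined by $M$ alone. An equivalent, more semantic argument leans on Proposition~\ref{prop:diagcat}: the underlying diagram is a morphism in the free SMC over $K \cup \Gamma \cup \Delta$, and morphisms in a free SMC have unique source and target arities, so $m$ and $n$ are pinned down; the free variables that appear with a given polarity in $M$ likewise pin down $\Gamma$ and $\Delta$.

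Second, to handle the strengthening rule cleanly I would normalise derivations so that all strengthenings appear at the very bottom, immediately before the conclusion. A strengthening applied inside a subderivation of a composition rule can always be floated downwards, because the pair $(x,y)$ being added is drawn from $\Gamma \uplus \Delta$ of the sub-judgement, which lives disjointly inside the conclusion's contexts; similarly, strengthening commutes with itself and with link, provided the pair being added does not involve the bound variables. Once a derivation is in this normal form, the syntax-directed skeleton produces a canonical anchor $R_0$, and any derivable anchor $R$ is obtained by repeatedly closing $R_0$ under added pairs subject only to the condition that the result remains a partial order.

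The main, and essentially only, obstacle is the bookkeeping for this normalisation step — verifying that strengthening really does commute past composition, tensor, and link in the appropriate way. This is routine once one notes that the conditions on $[{\leq}\cup\{(x,y)\}]$ being a partial order are preserved under the rearrangements in question, because transitive closure is monotone and the disjointness of the variable contexts prevents any new cycles from appearing during the permutation.
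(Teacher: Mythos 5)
Your overall strategy is legitimate and in fact more explicit than the paper's, which obtains the lemma as an immediate consequence of Proposition~\ref{prop:diagcat} (every denotation is a morphism in the free SMC over $K\cup\Gamma\cup\Delta$, so it is determined by the judgement rather than by the derivation). But the induction hypothesis you announce is false already at the base case: the two variable axioms in Fig.~\ref{fig:udtr} allow the bare term $x$ to be typed either as $x \mid - \vdash x : 0\rightarrow 1$ or as $- \mid x \vdash x : 1\rightarrow 0$, so $\Gamma$, $\Delta$, $m$ and $n$ are \emph{not} functions of $M$ alone. Coherence has to be read relative to a fixed judgement (equivalently, a fixed polarity assignment to the variable occurrences of $M$); once that is fixed, your syntax-directedness observation is correct and is the right engine for the proof, since strengthening is then the only non-determinism left in a derivation.

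The second problem is the normal form ``all strengthenings at the very bottom'': it is unattainable, because the link rule carries the side condition $x\leq y$, and the strengthening steps that establish $x\leq y$ must sit \emph{above} the corresponding link --- exactly the case you exclude when you require that the added pair not involve the bound variables. So the normalisation must leave a residue of strengthenings immediately above each link, and the ``canonical anchor $R_0$'' of the bare skeleton does not generate all derivable anchors by bottom-level strengthening alone (a strengthening above a link can, via transitive closure through the bound pair, induce order on free variables that survives the $\setminus\{x,y\}$ step). Neither defect is fatal to the conclusion: strengthening never changes the components $(V,I,O,E,f)$ of the denotation, only the anchor, so once the judgement is fixed the underlying diagram is computed by the unique syntax-directed skeleton and coherence up to the anchoring relation follows. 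But as written, both the base case and the permutation step of your argument would fail and need to be repaired along these lines.
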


Note  that different derivations may require different anchoring relations, even though the underlying diagrams are equal as FPGs. 
\begin{lemma}[Definability]\label{lem:defdag}
	Any uniflow diagram is definable just in terms of composition, tensor, link.
\end{lemma}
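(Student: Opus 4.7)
The plan is to leverage Proposition~\ref{prop:diagcat}, which already gives us every uniflow diagram $D$ as the interpretation of some term $M_0$ built from constants, identity, symmetry, composition and tensor in the free SMC over $K$. It therefore suffices to exhibit, once and for all, terms using only constants, variables, composition, tensor and link whose interpretation coincides with the structural combinators $i$ and $\gamma$, and then substitute these throughout $M_0$ to obtain a witness in the desired restricted sub-syntax.

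For the identity, I would take a fresh input variable $x$ and a fresh output variable $y$, form the tensor $y\otimes x$ of type $1\to 1$ with anchor $\{x\leq x, y\leq y\}$, apply the strengthening rule to add $x\leq y$ (which preserves partial-orderness since the pair only relates two fresh incomparable elements), and bind with $\overline{xy}$. Unfolding the definitions in Fig.~\ref{fig:sceq} together with the link rule shows that the previously labelled vertices become wire nodes joined by the new edge, and wire homeomorphism collapses the resulting wire-node chain into a single wire from the sole external input to the sole external output, matching the interpretation of $i$. For symmetry, the analogous construction uses fresh variables $x_1,x_2$ (input) and $y_1,y_2$ (output), forms a tensor $y_1\otimes y_2\otimes x_1\otimes x_2$ of type $2\to 2$, strengthens the anchor with the crossed pairs $x_1\leq y_2$ and $x_2\leq y_1$, and applies two successive links; the crossed binding pattern produces the two-wire crossing that interprets $\gamma$, and the two added anchor pairs are mutually compatible since they relate only freshly introduced, otherwise incomparable, variables.

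With these derived terms $\tilde\imath$ and $\tilde\gamma$ in hand, the proof concludes by substitution: replace each occurrence of $i$ and $\gamma$ in $M_0$ by a copy of $\tilde\imath$ respectively $\tilde\gamma$, using globally fresh variable names at every substitution site. The main obstacle I anticipate is the anchor bookkeeping across substitutions: each substituted sub-term carries a local strengthening of its anchor, and one must verify that, as these sub-terms are glued together by the sequential and tensor rules (which take transitive closures with extra Cartesian-product pairs on variables), the accumulated relation remains a partial order. Because all variables introduced in the substitutions are globally fresh and all added pairs lie entirely among those fresh names, no cycle can ever be created, so acyclicity is preserved at every step. The resulting term uses only constants, variables, composition, tensor and link, and by Proposition~\ref{prop:diagcat} denotes the same uniflow diagram $D$, completing the argument.
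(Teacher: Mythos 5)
Your proposal is correct, but it proves the lemma by a genuinely different route than the paper. The paper works directly on the graph: it performs a wire homeomorphism introducing two fresh labelled nodes on every connector, writes the diagram in one shot as $\overline{x_1x_1'}\cdots\overline{x_nx_n'}.\,\vec u;\bigl(\bigotimes_i \vec y_i;f_i;\vec z_i\bigr);\vec v$ --- a parallel composition of all the constant boxes with every port named --- and discharges well-typedness by observing that each $x_i\leq x_i'$ can be added to the anchor without creating a cycle because the underlying graph is a DAG. You instead take a structural combinator term $M_0$ denoting the diagram and macro-expand $i$ and $\gamma$ into closed link-terms, in the style of Example~\ref{ex:gam}. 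Both arguments go through, but two points are worth noting. First, your anchor bookkeeping is even easier than you make it out to be: $\tilde\imath$ and $\tilde\gamma$ are \emph{closed} derivations with empty variable contexts and empty anchors once their internal links are discharged, so nothing propagates into the Cartesian-product term of the sequential-composition rule and there is literally nothing to accumulate across substitution sites. Second, your route presupposes the converse direction of Prop.~\ref{prop:diagcat} --- that every uniflow diagram is the denotation of \emph{some} term built from $i$, $\gamma$, $;$, $\otimes$ and constants, i.e.\ the freeness of framed point DAGs over $K$. The paper does rely on this fact elsewhere (for the existence of $\tilde M$ in the theorem that follows), but its direct construction of $\hat M$ does not need it; moreover, that construction yields as an explicit byproduct the flat nominal normal form which the paper identifies with conventional graph-language syntax and reuses essentially verbatim in the biflow and comonoid definability lemmas. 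Your term instead inherits whatever sequential/tensor nesting $M_0$ happened to have, which suffices for the statement as given but loses that normal-form payoff.
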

\begin{proof}
	We introduce exactly two link nodes on each connector (a wire homeomorphism) and we label them with fresh variable names. Suppose there are $j$ wires and $k$ boxes, $m$ inputs and $n$ outputs. Let us denote by $u_i$ where $i=1,n$ the labels associated with input connectors and by $v_i$ where $i=1,m$ the labels associated with the output connectors. 
	Let $y_{i,j}$ (respectively $z_{i,j}$) be the labels of the $j$th input (respectively output) for each constant occurrence $k_i$. Let 
	$
	\vec u=\bigotimes_{i=1,n}u_i,
	\vec v=\bigotimes_{i=1,m}v_i,\ 
	\vec{y_i}=\bigotimes_{j=1,\text{dom}(k_i)} y_{i,j},
	\vec{z_i}=\bigotimes_{j=1,\text{codom}(k_i)} z_{i,j}. $
	Let $M_0=\vec u; \left(\bigotimes_{i=1,k} \vec y_i; k_i; \vec z_i\right); \vec v$. 
	The term is 
$	M=\overline{x_1x_1'}\cdots\overline{x_nx'_n}.M$. A link $\overline {x_ix'_i}$ is created if there is an edge between the nodes labelled by $x_i$ and $x'_i$ in $M_0$, with $x_i,x'_i$ chosen from the variables introduced above. $M$ should be a closed form.  
	It is straightforward to show that this is indeed denotes the desired diagram. 
	We only need to prove that this is well typed. The subterm $\bigotimes_{i=1,k} \vec y_i; f_i;\vec z_i$ is clearly well typed. Moreover, we can always extend the order with any $x_i\leq x_i'$ before linking without creating cycles because the starting graph is a DAG. 
\end{proof}
The construction in the proof is just the flat nominal notation used by conventional graph languages. 
\begin{theorem}
	For any term $M$ there exist terms $\tilde M$ and $ \hat M$ such that $\tilde M$ is link-free and $\hat M$ is free of structural combinators (identity, symmetry) and 
	$
	\sbr M=\sbr{\tilde M}=\sbr{\hat M}.
	$
\end{theorem}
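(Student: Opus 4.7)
The plan is to derive both existence statements from results already established in the excerpt. For $\hat M$, the statement is essentially the content of the Definability Lemma (Lemma~\ref{lem:defdag}), which explicitly constructs, for any uniflow diagram, a term that uses only variables, constants, composition, tensor and link. Applied to the diagram $\sbr{M}$, this construction yields a term $\hat M$ with $\sbr{\hat M}=\sbr{M}$ that mentions neither the identity $i$ nor the symmetry $\gamma$.

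For $\tilde M$, I would proceed by induction on the typing derivation of $M$. The base cases ($i$, $\gamma$, constants, and variables) together with the composition, tensor, and anchor-strengthening cases are either trivial or immediate from the inductive hypothesis, since none of them introduces a link. The only interesting case is the link $\overline{xy}.N$. Invoking Proposition~\ref{prop:diagcat}, the diagram $\sbr{\overline{xy}.N}$ is a morphism in the free symmetric monoidal category over $K$ augmented with the free variables as generators. Since this free SMC admits a complete combinatorial presentation in terms of $i$, $\gamma$, $;$ and $\otimes$ (the classical Joyal--Street theorem), such a morphism can be rewritten as a link-free combinator term, which is the desired $\tilde M$. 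A more concrete, constructive alternative is to take a link-free $\tilde N$ by the inductive hypothesis, then implement the new wire between the nodes formerly labelled $x$ and $y$ by threading an additional strand through $\tilde N$, routed out of identities and symmetries.

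The main obstacle is precisely this link-elimination step. Appealing to Joyal--Street is conceptually clean but black-box; giving an explicit construction requires tracking the positions of the two variable nodes inside $\tilde N$ and building a routing permutation that brings them adjacent so that a wire (really just a nested tensor of $i$'s) can replace the link. The anchor condition $x\leq y$ is indispensable here: it guarantees that the edge to be introduced is feed-forward in the DAG, so no trace combinator is needed and the routing can be accomplished entirely within the $i,\gamma$ fragment, which is exactly what makes the uniflow setting compatible with a link-free presentation.
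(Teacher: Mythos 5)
Your proposal is correct and matches the paper's own (very short) argument: $\hat M$ comes from the Definability Lemma (Lem.~\ref{lem:defdag}), and $\tilde M$ comes from Prop.~\ref{prop:diagcat} together with the fact that morphisms in the free strict SMC over $K\cup\Gamma\cup\Delta$ are exactly the denotations of combinator terms. The only difference is that your induction on the typing derivation for $\tilde M$ is unnecessary scaffolding, since its sole non-trivial case (the link) reduces to the same global appeal to Prop.~\ref{prop:diagcat} that the paper applies to $M$ directly; your closing remark about the anchor $x\leq y$ ensuring the new edge is feed-forward is a correct observation that the paper leaves implicit.
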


\begin{example}\label{ex:gam}
	Symmetry at any type, e.g.~$\gamma_{2,2}:4\rightarrow 4$
	can be defined in  combinatorial or  nominal style:
	\begin{align*}
	\gamma_{2,2}&=\overline{aa'}.\overline{bb'}.\overline{cc'}.\overline{dd'}.a\otimes b\otimes c\otimes d\otimes c'\otimes d'\otimes a'\otimes b'
	=(i\otimes\gamma\otimes i);(\gamma\otimes \gamma);(i\otimes\gamma\otimes i).
	\end{align*}
\end{example}
Any $\gamma_{m,n}$ for $m,n\neq 0$ can be defined from $\gamma$, and identity $id_m$ at any non-zero type  can be defined from~$i$. We sometimes write the identity $id_m$ as just $m$. 
\begin{example}
	We can now revisit our introductory example. The most succinct description mixes variables and structural connectors: 
$	\overline{x'x}.\overline{yy'}.(f\otimes y);(i\otimes x'\otimes y'\otimes i);(g\otimes x).$
\end{example}

It can be easily checked that all equational properties of the underlying PROP are carried over to the syntax. We interpret $\Gamma\mid\Delta \vdash M\equiv N$ as the equality of the uniflow diagrams represented by $\sbr{\Gamma\mid\Delta\vdash M}$ and $\sbr{\Gamma\mid\Delta\vdash N}$ (ignoring the anchoring order). The equations are now parametrised by the set of free variables, as in Fig.~\ref{fig:moncat}, assuming integers $m,n,$ etc. are chosen so that the terms are well-typed.
\begin{figure*}[h]
\begin{align}
	\Gamma\mid\Delta \vdash (M_1;M_2);M_3 & \equiv M_1;(M_2;M_3) & \text{associativity of composition}\\
	\Gamma\mid\Delta \vdash id_m;M & \equiv M;id_n\equiv M &\text{identity}\\
	\Gamma\mid\Delta \vdash (M;M')\otimes (N;N') & \equiv (M\otimes N);(M'\otimes N') &\text{tensor functoriality}\\
	\Gamma\mid\Delta \vdash (M_1\otimes M_2)\otimes M_3 & \equiv M_1\otimes (M_2\otimes M_3) &\text{strictness}\\
	\Gamma\mid\Delta \vdash M\otimes 0 & \equiv 0\otimes M \equiv M &\text{strictness}\\
	\Gamma\mid\Delta \vdash (M_1\otimes M_2);\gamma_{n_1,n_2} &\equiv \gamma_{m_2,m_1};(M_2\otimes M_1) &\text{symmetry}\\
	\vdash \gamma_{0,1}&\equiv\gamma_{1,0}\equiv 1 &\text{strict symmetry}\\
	\vdash \gamma_{m,n+p}&\equiv(\gamma_{m,n}\otimes p);(n\otimes\gamma_{m,p}) &\text{strict symmetry}\\[1.5ex] 
	\Gamma\mid\Delta \vdash \overline{xy}.M&\equiv \overline{x'y'}.M\{x'y'/xy\}\quad x',y'\text{ fresh}\label{eq:alpha} &\text{alpha-equivalence}\\
	\Gamma\mid\Delta \vdash \overline{xy}.(M;N) &\equiv (\overline{xy}.M) ; N  \quad x,y\not\in \mathit{fv}(N) &\text{scope extrusion}\label{eq:binder1} \\
	\Gamma\mid\Delta \vdash \overline{xy}.(M\otimes N) &\equiv (\overline{xy}.M) \otimes N  \quad x,y\not\in \mathit{fv}(N) &\text{scope extrusion}\label{eq:binder12} \\
	\Gamma\mid\Delta \vdash \overline{xy}.(M;N) &\equiv M ; \overline{xy}.N,\quad x,y\not\in \mathit{fv}(M)&\text{scope extrusion} \label{eq:binder2}\\
	\Gamma\mid\Delta \vdash \overline{xy}.(M\otimes N) &\equiv M \otimes \overline{xy}.N,\quad x,y\not\in \mathit{fv}(M)&\text{scope extrusion} \label{eq:binder22}\\
	\vdash \overline{xy}.x; y &\equiv 1 &\text{link identity}\label{eq:gamma1}
\end{align}
\caption{Equational theory of uniflow diagrams}
\label{fig:moncat}
\end{figure*}

We can now turn our attention to the equational theory of uniflow diagrams~(Fig.~\ref{fig:moncat}). Since $\overline{xy}.M$ is a binding construct, we expect alpha-equivalence to hold, which is reflected in a new axiom, Eqn.~\ref{eq:alpha}. 
Variable renaming $M\{x/y\}$ and the set of free variables $\mathit{fv}(M)$ are defined in the obvious way. For name management we also have new equations  for scope extrusion (Eqn.~\ref{eq:binder1}--\ref{eq:binder22}). Finally, the connection between variables and the structural connectors, namely identity, is given in Eq.~\ref{eq:gamma1}.
 
\begin{theorem}\label{thm:unisc}
	The equational theory of uniflow diagrams is sound and complete for the given semantics. 
\end{theorem}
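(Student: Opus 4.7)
Soundness I would verify axiom by axiom. The bulk of Fig.~\ref{fig:moncat} consists of the standard PROP equations, whose validity is automatic: by Prop.~\ref{prop:diagcat} the semantics factors through the free strict SMC over $K\cup\Gamma\cup\Delta$, which satisfies these equations definitionally. Only the four novel axioms require direct checking by unfolding the link clause of the semantics. Alpha-equivalence is immediate because the variable labels on the two linked wire nodes are erased by the $f\setminus\{x,y\}$ step. Each scope-extrusion equation holds because, under its freshness side condition, the two vertices $f^{-1}(x),f^{-1}(y)$ lie entirely inside one of the two subdiagrams, so adjoining the extra edge commutes with composing or tensoring on the other subdiagram. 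The link-identity axiom is verified by explicit computation: composing the two singleton diagrams yields a short chain of wire nodes, the link adjoins the missing edge, and the result collapses under wire homeomorphism to the identity wire.

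For completeness, given $\sbr{M}=\sbr{N}$, I would proceed in three stages. First, reduce $M$ and $N$ to a common shape, a \emph{link-prenex form} $\overline{x_1y_1}\cdots\overline{x_ky_k}.P$ with $P$ link-free, by repeatedly applying the scope-extrusion equations~\ref{eq:binder1}--\ref{eq:binder22}, preceded by alpha-renaming (Eq.~\ref{eq:alpha}) to satisfy the freshness side conditions. Second, apply the classical completeness of the PROP axioms for the free strict symmetric monoidal category (Joyal--Street coherence) to the link-free cores: two link-free terms denoting the same SMC-morphism over $K\cup\Gamma\cup\Delta$ are provably equal using only the non-link equations of Fig.~\ref{fig:moncat}, via Prop.~\ref{prop:diagcat}. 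Third, match the outer link prefixes: the ``extra'' edges adjoined by the link clause are combinatorially determined by the final diagram, so the multiset of link pairs is unique up to renaming, and alpha-equivalence together with the scope-extrusion equations (used to reorder commuting binders) aligns the two prefixes.

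The main obstacle is the link-prenex reduction. One has to check that scope extrusion applies uniformly under both sequential and tensor composition, on either side, and that repeatedly hoisting links does not conflict with the anchor partial order imposed by the typing rules. Since the equational theory explicitly ignores the anchor, and the anchor-strengthening rule is always available to re-validate an intermediate term, this is not a genuine obstruction but requires careful bookkeeping. A clean route for stage three is to further normalise the link-free core $P$ into the \emph{flat form} of Lemma~\ref{lem:defdag}, in which every interior wire of the diagram carries exactly one fresh link pair; this exhibits the link binders as a combinatorial invariant of the underlying uniflow diagram and makes the final alignment step essentially bureaucratic.
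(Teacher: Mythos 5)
Your soundness sketch and your first stage (alpha-renaming followed by hoisting all binders to prenex position via Eqns.~\ref{eq:binder1}--\ref{eq:binder22}) match the paper. The gap is in stages two and three. You claim that the ``extra'' edges adjoined by the link clause are combinatorially determined by the final diagram, so that the multiset of link pairs is unique up to renaming and the two prefixes can be aligned by alpha-equivalence and binder reordering. This is false: uniflow diagrams are quotiented by wire homeomorphism, which erases wire nodes, so the denotation does not remember which wires were assembled by links and which were structural. The minimal counterexample is $i$ versus $\overline{xy}.x;y$ --- equal diagrams, but zero binders against one, so no amount of prefix alignment can succeed, and your stage two cannot even get started because the link-free cores $i$ and $x;y$ are distinct morphisms over distinct signatures. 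Symptomatically, your completeness argument never invokes the link-identity axiom (Eq.~\ref{eq:gamma1}); but every other axiom in Fig.~\ref{fig:moncat} preserves the number of binders, so a completeness proof that omits Eq.~\ref{eq:gamma1} cannot be correct.

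The paper's proof fills exactly this hole: after the prenex step, it takes the outermost binder $\overline{xy}.M$, reasons about $M$ as a morphism in the free SMC over $K$ extended with $\Gamma\uplus\Delta\uplus\{x,y\}$, and uses the SMC equations (with $\gamma_{m,n}$ and the higher identities defined from $\gamma$ and $i$) to redraw the diagram so that the vertices labelled $x$ and $y$ become adjacent, i.e.\ so that $M$ is provably equal to a term containing the subterm $x;y$ with $x,y$ occurring nowhere else. Scope extrusion then \emph{narrows} the binder down to $\overline{xy}.x;y$, and Eq.~\ref{eq:gamma1} replaces it by the identity. Iterating removes all binders, and only then is free-SMC completeness (via Prop.~\ref{prop:diagcat}) applied, once, to the resulting link-free terms. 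Your closing suggestion of normalising both terms to the flat form of Lemma~\ref{lem:defdag} could be made to work, but it needs the same missing ingredient in the opposite direction: Eq.~\ref{eq:gamma1} read right-to-left to \emph{introduce} a link pair on every structural wire, which is precisely the step your write-up does not account for.
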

\begin{proof}
	The soundness is immediate. For completeness, consider two terms such that $\sbr {\Gamma\mid\Delta \vdash M}=\sbr {\Gamma\mid\Delta \vdash N}$. We first rename all bound variables using alpha-equivalence, then, using Eqns.~\ref{eq:binder1} and~\ref{eq:binder2} we move all binders into global scope. In case the resulting terms are link-free the equation $\Gamma\mid\Delta\vdash M\equiv N$ is provable because the terms denote morphism in the free strict symmetric monoidal category over signature $K$ extended with the uninterpreted variables $\Gamma,\Delta$ (Prop.~\ref{prop:diagcat}). 	
	Suppose that one of the terms has an outermost link, so it has form $\Gamma\mid\Delta\vdash\overline{xy}.M$. We reason diagrammatically about term $\Gamma,x\mid\Delta,y\vdash M$, which is a morphism in the free SMC over $K$ extended with 	$\Gamma\uplus\Delta\uplus \{x,y\}$. We can always redraw the diagram to an isomorphic diagram in 
	which the nodes labelled by $x$ and $y$ are adjacent: 
	
	\centerpic{1}{linkadj}
	
	This diagrammatic equivalence can be proved in the equational theory of the uniflow diagrams, noting that $\gamma_{m,n}$ and the identities at $m$ can be defined in terms of $\gamma$ and $i$. Moreover, the diagram on the right will correspond to a term which has subterm $x;y$, with variables $x,y$ not occurring anywhere else. So we can apply Eqns.~\ref{eq:binder1} and~\ref{eq:binder2} repeatedly until the link binder only binds this subterm, $\overline{xy}.x;y$. Then we use Eqn.~\ref{eq:gamma1} to replace this subterm with the identity. We repeat the process until all binders are eliminated. 
\end{proof}
%

\section{Biflow diagrams}

Let us consider now strict symmetric \textit{traced} monoidal categories and their associated diagrammatic language, \textit{biflow diagrams}~\cite{Cazanescu:1990:TNA:97367.97373}. Biflow diagrams are labelled directed graphs with vertices $V$ and edges $E$, equipped with a partial injection $f:V \rightharpoonup L$ from vertices to labels. Biflow diagrams are labelled directed graphs with interfaces $I,O$, lists of unlabelled ports, equivalent up to vertex renaming and wire homeomorphism. In other words, biflow diagrams are uniflow diagrams minus the anchoring relation, similar to Kissinger's \textit{FPGs}. 
The increased expressiveness of the diagrammatic language is reflected into a relaxation of the type system. The type judgements are $\Gamma\mid\Delta \vdash M:m_1\rightarrow m_2$, similar to the uniflow case but without an anchoring relation. The rules are also similar, with the addition of new rules for \textit{trace} (Fig.~\ref{fig:bftr}).
\begin{figure*}
	\small
	\centering
	\AxiomC{}
	\UnaryInfC{$x \mid - \vdash x : 0\rightarrow 1$} 
	\DisplayProof\quad
	\AxiomC{}
	\UnaryInfC{$-\mid x,  \vdash x : 1\rightarrow 0$} 
	\DisplayProof
	\quad
	\AxiomC{}
	\UnaryInfC{$- \mid - \vdash i : 1\rightarrow 1$} 
	\DisplayProof\quad
	\AxiomC{}
	\UnaryInfC{$- \mid - \vdash \gamma : 2\rightarrow 2$} 
	\DisplayProof
	\\[1.5ex]
	\AxiomC{$\Gamma \mid \Delta \vdash M: m_1\rightarrow m_2$}	
	\AxiomC{$\Gamma' \mid \Delta' \vdash N: n_1\rightarrow n_2$}
	\RightLabel{$m_2=n_1$}
	\BinaryInfC{$\Gamma\uplus\Gamma' \mid \Delta\uplus\Delta' \vdash M;N : m_1\rightarrow n_2$}
	\DisplayProof
	\quad
	\AxiomC{$\Gamma \mid \Delta \vdash M: m_1\rightarrow m_2$}	
	\AxiomC{$\Gamma' \mid \Delta' \vdash N: n_1\rightarrow n_2$}
	\BinaryInfC{$\Gamma\uplus\Gamma' \mid \Delta\uplus\Delta' \vdash M\otimes N : m_1\rightarrow n_2$}
	\DisplayProof
	\\[1.5ex]
	\AxiomC{$\Gamma\mid\Delta \vdash M: m_1\rightarrow m_2$}
	\UnaryInfC{$\Gamma\mid\Delta \vdash \mathrm{Tr}^i (M):m_1-i\rightarrow m_2-i$}
	\DisplayProof
	\quad
	\AxiomC{$\Gamma,x \mid\Delta, y\vdash M: m_1\rightarrow m_2$}	
	\UnaryInfC{$\Gamma\mid\Delta  \vdash \overline{xy}.M:m_1\rightarrow m_2$}
	\DisplayProof
	\caption{Type system for biflow diagrams}
	\label{fig:bftr}
\end{figure*}
If the trace is over the unit we omit the superscript. Traces for $i>1$ need not be included as primitive, as they can be constructed out of unit traces. 

The diagrammatic interpretation is the same as in the previous section, omitting the anchoring order on vertices. The interpretation of the unit trace operator is the standard one, a \textit{feedback} edge between the top input and output ports. Assuming that  
$\sbr{\Gamma \mid \Delta \vdash M: m+1\rightarrow n+1}= (V,i::I,o::O, E, f)$,
$
	\sbr{\Gamma \mid \Delta \vdash \mathrm{Tr}(M) : m\rightarrow n}=
	(V,I,O,E\cup\{(o,i)\}).
$
For equational completeness we also include the (trivial) trace over 0, $\mathrm{Tr}^0(M)=M$. 
Soundness holds, immediately. 
\begin{lemma}[Biflow definability]\label{lem:bifdef}
	Any biflow diagram is definable in terms of composition, tensor, link.
\end{lemma}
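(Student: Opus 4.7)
My plan is to mirror the construction used in Lemma~\ref{lem:defdag} for uniflow diagrams, which will in fact become simpler once we discard the anchoring order. Given a biflow diagram with $m$ inputs, $n$ outputs, $k$ labelled boxes $k_1,\ldots,k_k$, and some collection of wires, I first perform a wire homeomorphism that inserts two fresh wire nodes on every single wire of the diagram and labels them with fresh variable names $x_j, x_j'$. I also introduce a labelled node for each input port (names $u_1,\ldots,u_m$) and each output port (names $v_1,\ldots,v_n$), and for each box $k_i$ I name its input ports $y_{i,1},\ldots$ and output ports $z_{i,1},\ldots$.

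Using the same abbreviations $\vec u, \vec v, \vec{y_i}, \vec{z_i}$ as in the uniflow proof (tensor products of the corresponding variables), I form the term
\[
M \;=\; \overline{x_1 x_1'}\cdots\overline{x_N x_N'}.\;\vec u\;;\;\Bigl(\bigotimes_{i=1,k} \vec{y_i}; k_i; \vec{z_i}\Bigr)\;;\;\vec v,
\]
where $N$ is the total number of wires cut by the inserted pairs. Each $\overline{x_j x_j'}$ binding reconstructs exactly one original wire, since the semantics of link fuses the two wire nodes into a single edge, and applying wire homeomorphism collapses this back to the original connector.

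The reason this works uniformly for biflow diagrams is precisely that the anchoring order is no longer part of the typing judgement. In the uniflow proof, the nontrivial side condition was that each new ordering $x_j \leq x_j'$ added prior to linking had to extend to a partial order, which was justified by the fact that the underlying graph was a DAG. In the biflow setting the typing rules in Fig.~\ref{fig:bftr} have no anchoring component, so $\overline{xy}.M$ is unconditionally well formed whenever $x,y$ appear with the right polarities in $M$. Thus well-typedness of the term above is immediate from well-typedness of each subterm $\vec{y_i}; k_i; \vec{z_i}$ and the standard typing of $\vec u, \vec v$.

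I do not expect any real obstacle here, since the only subtle step in the uniflow version (producing a compatible anchor) has been eliminated; the remaining verification that $\sbr{M}$ equals the original diagram up to wire homeomorphism is a direct unfolding of the diagrammatic semantics of tensor, sequential composition, and link, exactly as in Lemma~\ref{lem:defdag}.
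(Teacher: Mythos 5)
Your proof is correct and takes essentially the same route as the paper's: reuse the construction of Lemma~\ref{lem:defdag} verbatim and observe that well-typedness is now immediate because the anchoring order has been dropped from the biflow typing rules, so the only delicate step of the uniflow argument disappears. (The sole cosmetic difference is that you retain the $\vec u;\cdots;\vec v$ wrapping for the interface ports, which the paper's displayed term elides.)
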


\begin{theorem}
	For any term $ M$ in the language of biflow diagrams, there are forms $ \tilde M$ and $\hat M$ such that $\tilde M$ is link-free and $\hat M$ is free of structural combinators (identity, symmetry, trace) and 
	$
	\sbr M=\sbr{\tilde M}=\sbr{\hat M}.
	$
\end{theorem}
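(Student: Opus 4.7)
The plan is to follow the same two-pronged strategy as in the uniflow theorem, leveraging the two results already in place for biflow diagrams.

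For the form $\hat M$ that is free of structural combinators, I would simply invoke Lem.~\ref{lem:bifdef}: it constructs, from any biflow diagram, a term built only from composition, tensor, and link. Reading the term produced by that construction applied to $\sbr{\Gamma\mid\Delta\vdash M}$ gives $\hat M$ directly, with no further work required.

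For the link-free form $\tilde M$ the plan is to establish the biflow analog of Prop.~\ref{prop:diagcat}: every biflow diagram $\sbr{\Gamma\mid\Delta\vdash M:m\to n}$ is a morphism in the free strict symmetric \emph{traced} monoidal category (STMC) over the signature $K\cup\Gamma\cup\Delta$. Granted this, the morphism is denotable by a combinator expression over $i$, $\gamma$, $;$, $\otimes$, and $\mathrm{Tr}$, and translating that expression back into the term language yields $\tilde M$.

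The principal obstacle is discharging the traced-monoidal correspondence. In the uniflow case the analog was immediate because acyclicity of the underlying graphs (guaranteed by the anchoring order) keeps everything inside the free SMC. In the biflow case cycles are admitted and must be routed through $\mathrm{Tr}$. I would argue this by induction on the typing derivation: the clauses for $i$, $\gamma$, constants, sequential and tensor composition are the standard STMC interpretations. The only delicate clause is link: $\overline{xy}.M$ fuses two wire nodes, and up to wire homeomorphism this identification is either purely ``forward'' (already expressible without trace, via symmetries and composition, exactly by the argument used in the completeness proof Thm.~\ref{thm:unisc} that brings the two endpoints adjacent) or it closes a feedback loop, which is captured by one application of $\mathrm{Tr}$ after suitable pre- and post-composition with symmetries routing the two endpoints to the top port. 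Both cases rely only on the already-established fact that arbitrary $\gamma_{m,n}$ and identities at arbitrary types are definable from $\gamma$ and $i$, as illustrated in Ex.~\ref{ex:gam}. Hence every clause of the derivation has a link-free realisation in the STMC combinators, yielding the desired $\tilde M$.
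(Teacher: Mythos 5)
Your proposal is correct and follows essentially the same route as the paper: the paper's own proof is a one-liner deferring $\hat M$ to Lem.~\ref{lem:bifdef} and leaving the link-free form $\tilde M$ implicit by analogy with the uniflow theorem (where Prop.~\ref{prop:diagcat} is invoked). Your explicit treatment of the traced-monoidal analogue of Prop.~\ref{prop:diagcat}, including the case analysis for the link clause, supplies detail the paper omits but does not change the argument.
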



In the equational theory we inherit all the equations from the previous section, plus the trace equations~\cite{DBLP:conf/tlca/Hasegawa97}. All the additional equations are given in Fig.~\ref{fig:treqns}. Surprisingly, we do not need new equations for link, except for scope extrusion in the presence of trace (Eqn.~\ref{eq:trextr}).

Other relations between trace and link can be derived in the existing equational theory. The proposition below is clearly sound in the model, corresponding to the two alternative descriptions of a back-link, using trace or \textit{link}.
\begin{proposition}\label{prop:linktr}
$	\Gamma\mid\Delta \vdash \mathrm{Tr} (M)
	\equiv\overline{yx}.(x\otimes m);M;(y\otimes n)
$
\end{proposition}
\begin{proof}
	\begin{align*}
	\overline{yx}.(x\otimes m);M;(y\otimes n) 
	&= \overline{yx}.\mathrm{Tr}^0\bigl((x\otimes m);M;(y\otimes n)\bigr) &\text{(vanishing)} \\
	&=\overline{yx}.\mathrm{Tr}\bigl(m;M;(y;x\otimes n)\bigr) &\text{(sliding)} \\
	&=\overline{yx}.\mathrm{Tr}\bigl(M;(y;x\otimes n)\bigr) &\text{(identity)} \\
	&=\mathrm{Tr}\bigl(M;(\overline{yx}.(y;x)\otimes n)\bigr) &\text{(link scope)} \\
	&=\mathrm{Tr}\bigl(M;(1\otimes n)\bigr) &\text{(link)} \\
	&=\mathrm{Tr}\bigl(M\bigr) &\text{(identity)}
	\end{align*}
\end{proof}
\begin{theorem}
	The equational theory of biflow diagram is sound and complete for the given semantics.
\end{theorem}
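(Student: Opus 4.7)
Soundness is routine: each trace axiom (Cazanescu--Stefanescu / Hasegawa) is a standard equation in any strict symmetric traced monoidal category and holds of our biflow diagrams by direct inspection of the feedback-edge interpretation; the only new equation, the trace--link scope extrusion (Eqn.~\ref{eq:trextr}), is sound because when $x,y \notin \mathit{fv}(\mathrm{Tr}^i(M))$ in the sense required, linking and feedback act on disjoint parts of the graph so their order is irrelevant. This I would dispatch in one short paragraph that does a representative diagram check.

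For completeness, I would mirror the strategy of Theorem~\ref*{thm:unisc} adapted to the traced setting. Given $\Gamma\mid\Delta \vdash M,N$ with $\sbr{M} = \sbr{N}$, I first use alpha-equivalence to freshen all bound names, then push every link binder to the outermost scope using the four scope-extrusion equations inherited from the uniflow theory together with the new Eqn.~\ref{eq:trextr} that lets binders cross a trace. This puts both terms in the normal form $\overline{x_1 y_1}\cdots\overline{x_k y_k}.P$ where $P$ is link-free. Because $P$ is built only from constants, variables, $i,\gamma,;,\otimes,\mathrm{Tr}$, it denotes a morphism in the free strict symmetric traced monoidal category over the signature $K \cup \Gamma \cup \Delta \cup \{x_i,y_i\}$ (the traced analogue of Prop.~\ref{prop:diagcat}).

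Next I invoke the known completeness of the trace axioms for the free strict STMC (Joyal--Street--Verity, Hasegawa, Cazanescu--Stefanescu): two link-free trace expressions that denote the same biflow diagram are interderivable using only the equations of Fig.~\ref{fig:moncat} (without the link axioms) together with the trace equations of Fig.~\ref{fig:treqns}. This reduces the problem to matching the outer link prefixes. For each outermost $\overline{xy}.P$, the variables $x$ and $y$ appear in the diagram of $P$ as labelled dangling nodes at positions $0\to 1$ and $1\to 0$ respectively; by the same diagrammatic-adjacency argument as in the uniflow proof (now with extra freedom because the anchor is gone), I rearrange $P$ via symmetries so that the labelled nodes sit next to each other as the subterm $x;y$, extrude the binder inward using Eqns.~\ref{eq:binder1}--\ref{eq:binder22}, and apply the link identity Eqn.~\ref{eq:gamma1} to replace $\overline{xy}.x;y$ with $1$. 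Iterating strips all outer binders, after which the two terms are manifestly equal.

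\textbf{The main obstacle} I anticipate is the second step above: justifying that both terms normalise under scope extrusion to a form with all binders at the outside. One must check that Eqn.~\ref{eq:trextr} together with the sequential- and tensor-extrusion equations really suffice to hoist a binder past \emph{any} surrounding context, which requires an induction on term structure and a side-condition analysis to confirm that the freshness hypotheses $x,y \notin \mathit{fv}(\cdot)$ can always be satisfied after alpha-renaming. The link-elimination step is comparatively easy, since Prop.~\ref{prop:linktr} already shows trace and link are mutually expressible, so one could also, as a fallback, translate all links to traces via Prop.~\ref{prop:linktr} and invoke STMC completeness directly on the resulting link-free terms.
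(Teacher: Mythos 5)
Your skeleton matches the paper's: soundness by inspection of the graph interpretation, completeness by alpha-renaming, hoisting every binder to the outermost scope via the scope-extrusion equations (including Eqn.~\ref{eq:trextr} to cross traces), and then eliminating the outer binders one at a time against the background completeness of the traced-SMC axioms. The problem is the step you present as the main one. Eliminating an outermost $\overline{xy}$ by rearranging the body \emph{via symmetries} until the two labelled nodes form a sequential subterm reducible by the link identity (Eqn.~\ref{eq:gamma1}) only works when the wire the link creates is a forward wire. That is exactly what the anchoring order guarantees in the uniflow case, and exactly what can fail here: in a biflow diagram a link may close a feedback loop, and then no permutation of ports by symmetries exhibits the sink-node/source-node pair as a subterm whose replacement by the identity wire yields the right diagram --- the eliminated link must become a trace, not an identity. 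So ``the same adjacency argument, with extra freedom because the anchor is gone'' is not a proof; the absence of the anchor is precisely what breaks it. (The adjacency route can be repaired, but only by first threading the relevant wire through an explicitly introduced trace so that the subterm $\overline{xy}.x;y$ sits inside it --- which is no longer ``via symmetries'' and is in effect a re-derivation of Prop.~\ref{prop:linktr}.)

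What you relegate to a fallback is the paper's actual argument: for each outermost binder $\overline{xy}.M$, use the TMC equations to rewrite $M$ into the interface form $(y\otimes m);M';(x\otimes n)$, with the two labelled nodes level with the input and output interfaces, and then apply Prop.~\ref{prop:linktr} right-to-left to replace the binder by $\mathrm{Tr}(M')$. This handles forward and feedback links uniformly. Note that this route still requires the diagrammatic rearrangement you were worried about --- only the target shape changes (nodes pushed to the interfaces rather than made adjacent) --- and it still rests, as the paper's proof does, on the cited completeness of the trace axioms for the free traced SMC and on the unargued claim that scope extrusion normalises every term to an all-binders-outside form. Promote the fallback to the main argument and your proposal becomes essentially the paper's proof.
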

\noindent
\textbf{Note}: Diagrams corresponding to compact-closed categories can be described analogously. 
\begin{figure*}
\begin{align}
\Gamma\mid\Delta\vdash\mathrm{Tr}^m((g\otimes n);f)&\equiv g;\mathrm{Tr}^m(f) &\text{left-tightening}\\
\Gamma\mid\Delta\vdash\mathrm{Tr}^m(f;(g\otimes n))&\equiv \mathrm{Tr}^m(f);g &\text{right-tightening}\\
\Gamma\mid\Delta\vdash\mathrm{Tr}^m(f;(n\otimes g))&\equiv \mathrm{Tr}^{p}((q\otimes g);f) &\text{sliding}\\
\Gamma\mid\Delta\vdash\mathrm{Tr}^0(f)&\equiv f &\text{vanishing}\\
\Gamma\mid\Delta\vdash\mathrm{Tr}^{m+n}(f)&\equiv\mathrm{Tr}^{m}(\mathrm{Tr}^n(f)) &\text{vanishing}\\
\Gamma\mid\Delta\vdash\mathrm{Tr}^m(n\otimes s)&\equiv n\otimes\mathrm{Tr}^m(f) &\text{superposing}\\
\vdash \mathrm{Tr}^1(\gamma)&\equiv 1 &\text{yanking}\\[1.5ex]
\Gamma\mid\Delta\vdash \overline{xy}.\mathrm{Tr}^i(M)&\equiv\mathrm{Tr}^i(\overline{xy}.M), \quad i\in\{0,1\}. &\text{scope extrusion}\label{eq:trextr}
\end{align}
\caption{Additional equations for biflow diagrams}
\label{fig:treqns}
\end{figure*}
%
%
%

\section{Monoid and comonoid structures}

In the specification of diagrams corresponding to circuits and systems two kinds of structural components are used quite extensively: the splitting of two connectors and the joining of two connectors. A \textit{monoid}, in this particular diagrammatic context is a pair of constant morphisms $(\phi, \eta)$ where $\phi:2\rightarrow 1$ is the \textit{multiplication} and $\eta:1\rightarrow 0$ the \textit{unit}, interpreted as the following biflow diagrams corresponding to ``joining two connectors'' and, respectively, to a ``dangling output'' connector:
\begin{center}
	\raisebox{1.5ex}{$\sbr{\phi:2\rightarrow 1}=$}\includegraphics[]{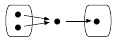} \qquad
	\raisebox{1.5ex}{$\sbr{\eta:0 \rightarrow 1}=$}\includegraphics[]{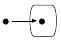}. 
\end{center}
Conversely, a \textit{co-monoid} consist of a \textit{co-multiplication} $\psi:1\rightarrow 2$ corresponding to ``splitting'' two connectors and a \textit{co-unit} $\epsilon:0\rightarrow 1$, a ``dangling input''. 
\begin{center}
	\raisebox{1.5ex}{$\sbr{\psi:1\rightarrow 2}=$}\includegraphics[]{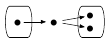} \qquad
	\raisebox{1.5ex}{$\sbr{\epsilon:1\rightarrow 0}=$}\includegraphics[]{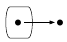}. 
\end{center}
The monoid should be associative, commutative and have a unit law. The corresponding diagrams are in Fig.~\ref{fig:mono}. The co-monoid should be co-associative, co-commutative and have a co-unit, with the equations and diagrams the converse of the above.

\begin{figure}[t]
\includegraphics[scale=0.85]{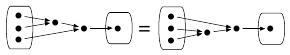}\hfill
\includegraphics[scale=0.85]{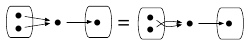}\hfill
\includegraphics[scale=0.85]{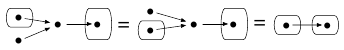}\\
$(\phi\otimes 1);\phi=(1\otimes \phi);\phi$ \hfill
$\phi=\gamma;\phi$ \hfill
$(1\otimes \eta);\phi=(\eta\otimes 1);\phi=1$
\caption{Commutative monoid laws}\label{fig:mono}
\end{figure}

Other laws that apply in some contexts are \textit{Frobenius} ($(\phi\otimes 1);(1\otimes \psi)=\phi;\psi$) and the \textit{special} law ($\psi;\phi=1 $). Diagrammatically they are:
\centerpic{1}{specfrob}
If all these equations are satisfied, any structure constructed out of the multiplication, co-multiplication, unit and co-unit can be reduced to a canonical form called informally a \textit{spider diagram}~\cite{Coecke2008}. Informally, all such constructions will denote a diagram of shape equivalent to 
\centerpic{1}{spider}
Precisely how these equations are chosen and how they are incorporated into the underlying graph model presents many choices that must be dealt with on a case-by-case basis, depending on the intended applications. In the next sequel we  consider two interesting scenarios, but more variations are possible. 

\subsection{Comonoid structure}

One natural way to add a comonoid structure to a biflow diagram is by adding constant morphisms $\psi:1\rightarrow 2$ and $\epsilon:1\rightarrow 0$ with the semantics of the previous section. The wire-homeomorphism, which consists of the removal of all possible unlabelled nodes (or, conversely, the insertion of spurious unlabelled nodes) carry over in the obvious way to the new setting. Note that in the absence of the monoid structure all wire nodes, by construction, have exactly one incoming and one outgoing edge. With the monoid structure in place, every wire node has exactly one incoming edger and zero, one or more outgoing edges and wire homeomorphisms need to re-assign the target of the edge. The FPG $(V\uplus \{a\},I,O,E,f)$ such that $f(a)$ is undefined ($a$ is a wire node) is (still) homeomorphic to the FPG
$
(V,I,O,E\setminus\{a\}\cup\{(b,c) \mid (b,a),(a,c)\in E\}).
$
Diagrammatically, the homeomorphism is represented as:

\begin{center}
\includegraphics{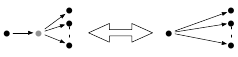}
\end{center}

The associativity, co-commutativity and co-unit laws now come then directly from the new wire homeomorphism. For example $\psi;(\psi\otimes 1)=\psi;(1\otimes\psi)$ is:
\begin{center}
	\includegraphics[]{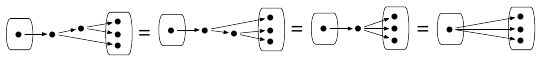}
\end{center}

To reflect the fact that wire nodes have now exactly one incoming and zero or more outgoing edges we update the type system to allow contraction and weakening for input variables while output variables preserve the linear discipline, as seen if Fig.~\ref{fig:comdi}.
\begin{figure}
\begin{center}
	\small 
	\AxiomC{}
	\UnaryInfC{$\Gamma, x \mid - \vdash x : 0\rightarrow 1$} 
	\DisplayProof\quad
	\AxiomC{}
	\UnaryInfC{$\Gamma\mid x  \vdash x : 1\rightarrow 0$} 
	\DisplayProof
	\quad
	\AxiomC{}
	\UnaryInfC{$\Gamma \mid - \vdash 1 : 1\rightarrow 1$} 
	\DisplayProof\\[1.5ex]
	\AxiomC{}
	\UnaryInfC{$\Gamma\mid - \vdash \gamma : 2\rightarrow 2$} 
	\DisplayProof
	\quad
	\AxiomC{}
	\UnaryInfC{$\Gamma \mid - \vdash \psi : 1\rightarrow 2$} 
	\DisplayProof\quad
	\AxiomC{}
	\UnaryInfC{$\Gamma\mid - \vdash \epsilon : 1\rightarrow 0$} 
	\DisplayProof
	\\[1.5ex]
	\AxiomC{$\Gamma \mid \Delta \vdash M: m_1\rightarrow m_2$}	
	\AxiomC{$\Gamma \mid \Delta' \vdash N: n_1\rightarrow n_2$}
	\RightLabel{$m_2=n_1$}
	\BinaryInfC{$\Gamma \mid \Delta\uplus\Delta' \vdash M;N : m_1\rightarrow n_2$}
	\DisplayProof
	\quad	
	\AxiomC{$\Gamma \mid \Delta \vdash M: m_1\rightarrow m_2$}	
	\AxiomC{$\Gamma \mid \Delta' \vdash N: n_1\rightarrow n_2$}
	\BinaryInfC{$\Gamma \mid \Delta\uplus\Delta' \vdash M\otimes N : m_1\rightarrow n_2$}
	\DisplayProof
	\\[1.5ex]
	\AxiomC{$\Gamma\mid\Delta \vdash M: m_1\rightarrow m_2$}	
	\AxiomC{$i=0,1$}
	\BinaryInfC{$\Gamma\mid\Delta \vdash \mathrm{Tr}^i (M):m_1-i\rightarrow m_2-i$}
	\DisplayProof\quad
	\AxiomC{$\Gamma,x \mid\Delta, y\vdash M: m_1\rightarrow m_2$}	
	\UnaryInfC{$\Gamma\mid\Delta  \vdash \overline{xy}.M:m_1\rightarrow m_2$}
	\DisplayProof
\end{center}
\caption{Type system for diagrams with a co-monoid.}
\label{fig:comdi}
\end{figure}
Even though the semantics of link remains the same, the graph invariants are different because of the presence of the co-multiplication and the co-unit. The diagram for $\overline{xy}.M$ is still the connection of (the single) occurrence of the node labelled with $x$ to the (zero or more) occurrences of the node(s) labelled with $y$, followed by the removal of the labels. The mathematical formulation is the same as before (Eqn.~\ref{eqn:link}), except for the anchoring relation.

\begin{lemma}[Biflow comonoid definability]\label{lem:bcdef}
	Any biflow diagram with a comonoid structure is definable in terms of composition, tensor and link. 
\end{lemma}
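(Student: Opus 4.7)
The plan is to adapt the construction used in Lem.~\ref{lem:bifdef} (and its ancestor Lem.~\ref{lem:defdag}) to accommodate the branching structure of wire nodes permitted by the comonoid. Previously each wire node had exactly one incoming and one outgoing edge, so inserting two labeled nodes per connector was symmetric. Here every wire node has a single incoming edge but zero or more outgoing edges, so a ``connector'' between constants is really a rooted tree whose root meets one constant output (or the diagram's input interface) and whose leaves meet constant inputs (or the diagram's output interface). The key observation is that such a rooted tree can be captured using a \emph{single} variable name: place one output-side label $x_w'$ at the root and one input-side label $x_w$ at each leaf, then link them with $\overline{x_w x_w'}$.

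Concretely, for each wire node $w$ fix a fresh variable $x_w$ and apply the wire homeomorphism that inserts one $x_w'$-labeled node on the unique incoming edge of $w$ and one $x_w$-labeled node on each outgoing edge (none if $w$ has no outgoing edges). For each constant occurrence $f_i$ we form the subterm $\vec{y_i}; f_i; \vec{z_i}$ exactly as in Lem.~\ref{lem:bifdef}, and we combine everything under an outer block of link binders, obtaining
\[
\overline{x_1 x_1'}\cdots\overline{x_n x_n'}.\bigotimes_{i=1,k}\vec{y_i}; f_i; \vec{z_i},
\]
with the analogous handling of interface labels $u_i, v_i$. The denotation of this term is by construction the given biflow diagram with its comonoid structure, because the semantics of link glues the single vertex labelled $x_w'$ to all vertices labelled $x_w$, which is precisely the ``one source, many sinks'' shape of $w$ after wire homeomorphism.

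The genuinely new point is derivability under the modified typing rules, and this is where the relaxation of the type system does exactly what is needed. A variable $x_w$ may occur in $\Gamma$ with arbitrary multiplicity $k\geq 0$; because $\Gamma$ is shared (rather than disjointly unioned) in the new composition and tensor rules, contraction with $k\geq 2$ is handled automatically, while the weakening case $k=0$ is absorbed at the leaves by the axiom $\Gamma,x \mid - \vdash x : 0\to 1$, whose $\Gamma$ is arbitrary. Meanwhile $x_w'$ always appears exactly once in $\Delta$, consistent with the linear discipline on output variables that is preserved by the new rules, so each link binder discharges a genuine $\Gamma,x_w \mid \Delta, x_w'$ premise.

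The step I expect to be the main obstacle is the bookkeeping for the construction in the presence of variably many occurrences of the same input variable: one must choose a canonical ordering of wire nodes, handle fan-out and dangling sinks uniformly, and feed interface wires into the same scheme. Once this bookkeeping is fixed, derivability follows by a routine induction bottom-up over the term, and the diagrammatic semantics falls out of the definition of link in the comonoid setting.
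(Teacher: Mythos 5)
Your proof takes essentially the same route as the paper's: it is the construction of Lem.~\ref{lem:defdag} adapted to fan-out by placing the linearly-used (output, $\Delta$) label once at the root of each wire node's branching and the non-linearly-used (input, $\Gamma$) label at each leaf, with typability discharged by the contraction and weakening now admitted for $\Gamma$. Your write-up is in fact more explicit than the paper's one-paragraph sketch, which only says that wire nodes are replaced by a linked pair of fresh variables and that the linear/non-linear usage makes the top-level link well typed.
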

From this it follows immediately that the \textit{link} construct is enough in terms of expressiveness, making the explicit use of the co-unit and the co-multiplication optional. 
\begin{theorem}
	For any term $ M$ in the language of biflow diagrams with comonoids, there are forms $ \tilde M$ and $\hat M$ such that $\tilde M$ is link-free and $\hat M$ is free of structural combinators (identity, symmetry, trace, comultiplication, counit) and 
	$
	\sbr M=\sbr{\tilde M}=\sbr{\hat M}.
	$
\end{theorem}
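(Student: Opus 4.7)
The plan is to mirror the structure of the two preceding versions of this theorem (for uniflow and for plain biflow diagrams), producing $\hat M$ and $\tilde M$ by two essentially independent arguments.

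First, for the structural-combinator-free form $\hat M$, I would invoke the definability lemma for biflow diagrams with comonoid structure, Lem.~\ref{lem:bcdef}. That lemma already constructs a term built only from composition, tensor, and link whose denotation is the given diagram, so its existence is immediate once one remarks that the construction applies uniformly to any biflow-with-comonoids diagram obtained from $\sbr{M}$.

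Second, for the link-free form $\tilde M$, I would follow the same pattern used in the uniflow and biflow theorems: observe that $\sbr{M}$ is a morphism in the appropriate free strict symmetric traced monoidal category with comonoid structure generated by the signature $K$ (together with the variables in $\Gamma$ and $\Delta$ treated as new generators, as in the analogue of Prop.~\ref{prop:diagcat}). Since every morphism in that free category is by definition expressible by the categorical combinators (composition, tensor, identity, symmetry, trace, comultiplication, counit, constants, and variables), we obtain a link-free term $\tilde M$ whose interpretation is exactly $\sbr{M}$.

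The only subtle point—and the only place the argument actually uses something new relative to the biflow proof—is that one must check that the analogue of Prop.~\ref{prop:diagcat} extends to this richer setting: biflow diagrams with a comonoid structure on wires correspond to morphisms in the free such category. This is standard once one notes that the wire-homeomorphism rules of the preceding subsection are exactly the coherence equations for a commutative comonoid built on each object, so the graphical calculus coincides with the free traced symmetric monoidal category augmented by the signature $\{\psi,\epsilon\}$ modulo the comonoid axioms. I expect this coherence/equivalence claim to be the main obstacle; once it is established, both halves of the theorem follow mechanically as above.
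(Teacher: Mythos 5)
Your proposal matches the paper's (largely implicit) argument: the paper states this theorem without a separate proof, relying exactly on Lem.~\ref{lem:bcdef} for $\hat M$ and on the same free-category definability pattern used in the uniflow and plain biflow theorems for $\tilde M$. Your added observation that the analogue of Prop.~\ref{prop:diagcat} must be re-verified in the presence of the comonoid structure is a point the paper elides, but it does not change the route---it only makes explicit a step the authors treat as routine.
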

The equational theory of the diagrammatic language is extended by the following two new equations:
\begin{center} 
\framebox{$
	\epsilon = \overline{xy}.x
	\qquad
	\psi = \overline{xy}.x\otimes y\otimes y
	.
	$}
\end{center}
\begin{theorem}\label{eqn:como}
	The equational theory of biflow diagrams with a comonoid is sound and complete for the given semantics. 
\end{theorem}
An analogous  link syntax and equational theory for a monoid structure is obvious.

\subsection{Frobenius structure}
A particularly useful combination of the monoid and the comonoid structures uses the Frobenius and special axioms, leading to so-called \textit{spider diagrams}~\cite{Coecke2008}.
Syntactically, besides the new constants $\phi$ and $\eta$, the monoid multiplication and unit, the type judgements now lose all linearity, with link nodes useable zero, one or more times both as input and output (Fig.~\ref{fig:spityp}). 

\begin{figure*}
	\begin{center}\small 
	\AxiomC{}
	\UnaryInfC{$\Gamma, x \mid \Delta \vdash x : 0\rightarrow 1$} 
	\DisplayProof\qquad
	\AxiomC{}
	\UnaryInfC{$\Gamma\mid x,\Delta  \vdash x : 1\rightarrow 0$} 
	\DisplayProof
	\qquad
	\AxiomC{}
	\UnaryInfC{$\Gamma \mid \Delta \vdash 1 : 1\rightarrow 1$} 
	\DisplayProof
	\qquad
	\AxiomC{}
	\UnaryInfC{$\Gamma \mid \Delta \vdash \phi : 2\rightarrow 1$} 
	\DisplayProof
	\\[1.5ex]
	\AxiomC{}
	\UnaryInfC{$\Gamma\mid \Delta \vdash \gamma : 2\rightarrow 2$} 
	\DisplayProof
	\quad
	\AxiomC{}
	\UnaryInfC{$\Gamma \mid \Delta \vdash \psi : 1\rightarrow 2$} 
	\DisplayProof\quad
	\AxiomC{}
	\UnaryInfC{$\Gamma\mid \Delta \vdash \epsilon : 1\rightarrow 0$} 
	\DisplayProof\quad
	\AxiomC{}
	\UnaryInfC{$\Gamma\mid \Delta \vdash \eta : 0\rightarrow 1$} 
	\DisplayProof
	\\[1.5ex]
	\AxiomC{$\Gamma \mid \Delta \vdash M: m_1\rightarrow m_2$}	
	\AxiomC{$\Gamma \mid \Delta \vdash N: n_1\rightarrow n_2$}
	\RightLabel{$m_2=n_1$}
	\BinaryInfC{$\Gamma \mid \Delta \vdash M;N : m_1\rightarrow n_2$}
	\DisplayProof\qquad
	\AxiomC{$\Gamma,x \mid\Delta, y\vdash M: m_1\rightarrow m_2$}	
	\UnaryInfC{$\Gamma\mid\Delta  \vdash \overline{xy}.M:m_1\rightarrow m_2$}
	\DisplayProof
	\\[1.5ex]	
	\AxiomC{$\Gamma \mid \Delta \vdash M: m_1\rightarrow m_2$}	
	\AxiomC{$\Gamma \mid \Delta \vdash N: n_1\rightarrow n_2$}
	\BinaryInfC{$\Gamma \mid \Delta \vdash M\otimes N : m_1\rightarrow n_2$}
	\DisplayProof
	\quad
	\AxiomC{$\Gamma\mid\Delta \vdash M: m_1\rightarrow m_2$}
	\AxiomC{$i=0,1$}	
	\BinaryInfC{$\Gamma\mid\Delta \vdash \mathrm{Tr}^i (M):m_1-i\rightarrow m_2-i$}
	\DisplayProof
\end{center}
\caption{Type system for biflow Frobenius diagrams}
\label{fig:spityp}
\end{figure*}
\begin{theorem}
	For any term $ M$ in the language of biflow diagrams with a Frobenius structure, there are forms $ \tilde M$ and $\hat M$ such that $\tilde M$ is link-free and $\hat M$ is free of structural combinators (identity, symmetry, trace, (co)multiplication, (co)unit) and 
	$
	\sbr M=\sbr{\tilde M}=\sbr{\hat M}.
	$
\end{theorem}
The two equations for the monoid structure are:
\framebox{$
	\eta=\overline{xy}.y,
	\quad
	\phi = \overline{xy}.x\otimes x\otimes y$.
}
\begin{theorem}\label{thm:eqnspd}
	The equational theory of biflow diagrams with a Frobenius structure is sound and complete for the given semantics. 
\end{theorem}

Note that the interaction of the monoid and co-monoid is not necessarily subject to the Frobenius law in diagrammatic languages. This is not a structure that is well studied in its own right, but it turns out to occur naturally in the treatment of digital circuits~\cite{GhicaJL}. The syntax in this case is less natural and the equational properties are more elusive, but the expressiveness and convenience of the notation remains an important attribute. 

\section{Example}

In this section we are giving an example of circuit-like system which has both structural features and arbitrary connections, namely a neural network. We will employ a free use of all the methods of specification available (TMC with monoid and co-monoid and the link operation) to give a succinct and informative description which is, in our subjective opinion, preferable to both the purely structural or the purely relational descriptions. These will be left as reader exercises. 

The net we use as an example is the \textit{gated recurrent unit}~\cite{DBLP:conf/emnlp/ChoMGBBSB14}, a simplification of the popular \textit{long short-term memory} net~\cite{DBLP:journals/neco/HochreiterS97} which is broadly used in wide range of applications, from machine translation to image classification. The basic cell of the net has the following architecture, where $\sigma$, $\tau$ are activation functions and  $\times, +$ arithmetic operations. 

\begin{center}
	\includegraphics[]{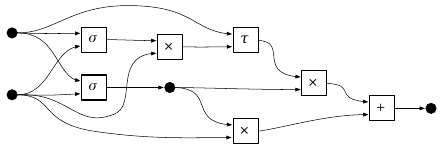}
\end{center}

Let $\psi_2:2\rightarrow 4$ be the (definable) co-monoid on a pair of inputs. 
The mixed-syntax and the purely structural notations for the GRU cell are:
\begin{align*}
\mathit{gru}&=
\overline{x'x}.\overline{ h'h}.
(x'\otimes h');
(x\otimes h);\psi_2;(\sigma \otimes h\otimes\sigma);
(x\otimes {\times} \otimes \psi);
(\tau\otimes 1\otimes {\times});
({\times}\otimes 1);
{+}\\
&=(\psi\otimes 1);(1\otimes\psi\otimes\psi);(3\otimes\psi
\otimes\psi);(2\otimes \gamma\otimes 3);(1\otimes\sigma\otimes\sigma\otimes 2);(2\otimes \gamma\otimes 1);\\
&\qquad (1\otimes{\times}\otimes\psi\otimes 1); (\tau\otimes 1\otimes{\times});({\times}\otimes 1);{+}
\end{align*}
Note that in the above the $x,h$ variables can have any width, not necessarily unit. 

A recurrent neural net consists of GRU cell with a recurrent link.  For practical application, instead of a recurrent link, a finite unfolding of the net is often used:
\begin{center}
	\includegraphics[]{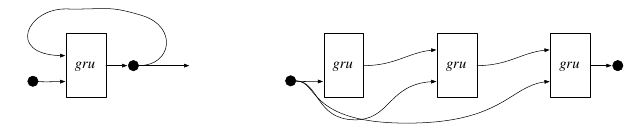}
\end{center}
The two nets can be described as $\mathrm{Tr}(\mathit{gru};\psi)$ and $\overline{x'x}. x'; (\mathit{gru}\otimes x); (\mathit{gru}\otimes x); \mathit{gru}$, respectively.

\section{Related and further work}


We have shown how variables can be used within a structural framework for diagram syntax. We showed how from the point of view of expressiveness the structural and the nominal syntax can be equivalent and complementary, if the use of variables is constrained by a suitable type system. We have highlighted a neat correspondence between non-linearity and the presence of a monoid or co-monoid structure in the diagram. By extending the structural notation with the new \textit{link} construct we showed that the existing equational properties are preserved and, moreover, this new structure itself has good equational properties. 
We believe that the combination of structural and nominal syntax can sometimes improve the readability and usability of diagram languages. Although this paper is only concerned with syntax, it is part of a larger research effort aimed at creating hardware languages more similar to programming languages in terms of their categorical~\cite{GhicaJung16} and operational semantics~\cite{GhicaJL}. As further work we intend to create \textit{structural} conservative extensions of diagram languages such as VHDL, \textsc{Simulink} or \textsc{Dot}. 

Several theoretical aspects we have not studied, but remain subject for further work. The most intriguing perhaps is the categorical semantics for the \textit{link} structure itself. Similar diagrammatic structures have been conjectured in the study of higher-order $\pi$-calculus, with the input and output links modelled as adjoint profunctors and a link-like operation modelled as their composition~\cite{vicary}. A smaller issue, but practically important, is understanding the syntax and the axiomatisation of links for diagrams with monoid and comonoid structures in the absence of the Frobenius law. These structures occur in diagrams modelling systems with directed flow of information, such as digital circuits, which are well motivated by applications. Finally, the various definability results (e.g. Lem.~\ref{lem:defdag}) suggest the possibility of arriving at normal forms for various classes of diagrams. 

In terms of the graph semantics, the frame point graphs is not the only possible starting point. Cospans of graphs~\cite{rosebrugh2005generic} or hypergraphs~\cite{DBLP:conf/lics/BonchiGKSZ16} have also been used as semantics for  diagrammatic monoidal categories. These alternative presentations may arguably offer, to the theorist, more conceptual clarity. What we find attractive about FPGs is the fact that they can be presented in a concrete way, eliding much of the categorical apparatus. We believe this could make our work accessible to a broader audience. Moreover, the \textit{link} construct is particularly easy to interpret in FPGs. However, these alternative semantic frameworks along with categorical semantics are being investigated.  

\noindent
\paragraph{Acknowledgement:} This work was supported in part by EPSRC grant 	EP/P004490/1.

\bibliographystyle{eptcs}
\bibliography{diag}   

\begin{thebibliography}{10}
\providecommand{\bibitemdeclare}[2]{}
\providecommand{\surnamestart}{}
\providecommand{\surnameend}{}
\providecommand{\urlprefix}{Available at }
\providecommand{\url}[1]{\texttt{#1}}
\providecommand{\href}[2]{\texttt{#2}}
\providecommand{\urlalt}[2]{\href{#1}{#2}}
\providecommand{\doi}[1]{doi:\urlalt{http://dx.doi.org/#1}{#1}}
\providecommand{\bibinfo}[2]{#2}

\bibitemdeclare{inproceedings}{DBLP:conf/lics/AbramskyC04}
\bibitem{DBLP:conf/lics/AbramskyC04}
\bibinfo{author}{Samson \surnamestart Abramsky\surnameend} \&
  \bibinfo{author}{Bob \surnamestart Coecke\surnameend} (\bibinfo{year}{2004}):
  \emph{\bibinfo{title}{A Categorical Semantics of Quantum Protocols}}.
\newblock In: {\sl \bibinfo{booktitle}{19th {IEEE} Symposium on Logic in
  Computer Science {(LICS} 2004), 14-17 July 2004, Turku, Finland,
  Proceedings}}, pp. \bibinfo{pages}{415--425},
  \doi{10.1109/LICS.2004.1319636}.

\bibitemdeclare{inproceedings}{DBLP:conf/icfp/BjesseCSS98}
\bibitem{DBLP:conf/icfp/BjesseCSS98}
\bibinfo{author}{Per \surnamestart Bjesse\surnameend}, \bibinfo{author}{Koen
  \surnamestart Claessen\surnameend}, \bibinfo{author}{Mary \surnamestart
  Sheeran\surnameend} \& \bibinfo{author}{Satnam \surnamestart
  Singh\surnameend} (\bibinfo{year}{1998}): \emph{\bibinfo{title}{Lava:
  Hardware Design in {H}askell}}.
\newblock In: {\sl \bibinfo{booktitle}{Proceedings of the third {ACM} {SIGPLAN}
  International Conference on Functional Programming {(ICFP} '98), Baltimore,
  Maryland, USA, September 27-29, 1998.}}, pp. \bibinfo{pages}{174--184},
  \doi{10.1145/289423.289440}.

\bibitemdeclare{inproceedings}{DBLP:conf/lics/BonchiGKSZ16}
\bibitem{DBLP:conf/lics/BonchiGKSZ16}
\bibinfo{author}{Filippo \surnamestart Bonchi\surnameend},
  \bibinfo{author}{Fabio \surnamestart Gadducci\surnameend},
  \bibinfo{author}{Aleks \surnamestart Kissinger\surnameend},
  \bibinfo{author}{Pawel \surnamestart Sobocinski\surnameend} \&
  \bibinfo{author}{Fabio \surnamestart Zanasi\surnameend}
  (\bibinfo{year}{2016}): \emph{\bibinfo{title}{Rewriting modulo symmetric
  monoidal structure}}.
\newblock In: {\sl \bibinfo{booktitle}{Proceedings of the 31st Annual
  {ACM/IEEE} Symposium on Logic in Computer Science, {LICS} '16, New York, NY,
  USA, July 5-8, 2016}}, pp. \bibinfo{pages}{710--719},
  \doi{10.1145/2933575.2935316}.

\bibitemdeclare{inproceedings}{DBLP:conf/popl/BonchiSZ15}
\bibitem{DBLP:conf/popl/BonchiSZ15}
\bibinfo{author}{Filippo \surnamestart Bonchi\surnameend},
  \bibinfo{author}{Pawel \surnamestart Sobocinski\surnameend} \&
  \bibinfo{author}{Fabio \surnamestart Zanasi\surnameend}
  (\bibinfo{year}{2015}): \emph{\bibinfo{title}{Full Abstraction for Signal
  Flow Graphs}}.
\newblock In: {\sl \bibinfo{booktitle}{Proceedings of the 42nd Annual {ACM}
  {SIGPLAN-SIGACT} Symposium on Principles of Programming Languages, {POPL}
  2015, Mumbai, India, January 15-17, 2015}}, pp. \bibinfo{pages}{515--526},
  \doi{10.1145/2676726.2676993}.

\bibitemdeclare{inproceedings}{DBLP:conf/emnlp/ChoMGBBSB14}
\bibitem{DBLP:conf/emnlp/ChoMGBBSB14}
\bibinfo{author}{Kyunghyun \surnamestart Cho\surnameend}, \bibinfo{author}{Bart
  \surnamestart van Merrienboer\surnameend}, \bibinfo{author}{{\c{C}}aglar
  \surnamestart G{\"{u}}l{\c{c}}ehre\surnameend}, \bibinfo{author}{Dzmitry
  \surnamestart Bahdanau\surnameend}, \bibinfo{author}{Fethi \surnamestart
  Bougares\surnameend}, \bibinfo{author}{Holger \surnamestart
  Schwenk\surnameend} \& \bibinfo{author}{Yoshua \surnamestart
  Bengio\surnameend} (\bibinfo{year}{2014}): \emph{\bibinfo{title}{Learning
  Phrase Representations using {RNN} Encoder-Decoder for Statistical Machine
  Translation}}.
\newblock In: {\sl \bibinfo{booktitle}{Proceedings of the 2014 Conference on
  Empirical Methods in Natural Language Processing, {EMNLP} 2014, October
  25-29, 2014, Doha, Qatar, {A} meeting of SIGDAT, a Special Interest Group of
  the {ACL}}}, pp. \bibinfo{pages}{1724--1734}.
\newblock \urlprefix\url{http://aclweb.org/anthology/D/D14/D14-1179.pdf}.

\bibitemdeclare{inproceedings}{Coecke2008}
\bibitem{Coecke2008}
\bibinfo{author}{Bob \surnamestart Coecke\surnameend} \& \bibinfo{author}{Ross
  \surnamestart Duncan\surnameend} (\bibinfo{year}{2008}):
  \emph{\bibinfo{title}{Interacting Quantum Observables}}.
\newblock In: {\sl \bibinfo{booktitle}{Automata, Languages and Programming,
  35th International Colloquium, {ICALP} 2008, Reykjavik, Iceland, July 7-11,
  2008, Proceedings, Part {II} - Track {B:} Logic, Semantics, and Theory of
  Programming {\&} Track {C:} Security and Cryptography Foundations}}, pp.
  \bibinfo{pages}{298--310}, \doi{10.1007/978-3-540-70583-3\_25}.

\bibitemdeclare{article}{Cazanescu:1990:TNA:97367.97373}
\bibitem{Cazanescu:1990:TNA:97367.97373}
\bibinfo{author}{V.~E. \surnamestart C\u{a}z\u{a}nescu\surnameend} \&
  \bibinfo{author}{Gheorghe \surnamestart Stef\u{a}nescu\surnameend}
  (\bibinfo{year}{1990}): \emph{\bibinfo{title}{Towards a New Algebraic
  Foundation of Flowchart Scheme Theory}}.
\newblock {\sl \bibinfo{journal}{Fundam. Inf.}}
  \bibinfo{volume}{13}(\bibinfo{number}{2}), pp. \bibinfo{pages}{171--210}.
\newblock \urlprefix\url{http://dl.acm.org/citation.cfm?id=97367.97373}.

\bibitemdeclare{incollection}{ghica2013diagrammatic}
\bibitem{ghica2013diagrammatic}
\bibinfo{author}{Dan~R \surnamestart Ghica\surnameend} (\bibinfo{year}{2013}):
  \emph{\bibinfo{title}{Diagrammatic reasoning for delay-insensitive
  asynchronous circuits}}.
\newblock In: {\sl \bibinfo{booktitle}{Computation, Logic, Games, and Quantum
  Foundations}}, \bibinfo{publisher}{Springer}, pp. \bibinfo{pages}{52--68},
  \doi{10.1007/978-3-642-38164-5\_5}.

\bibitemdeclare{inproceedings}{GhicaJung16}
\bibitem{GhicaJung16}
\bibinfo{author}{Dan~R. \surnamestart Ghica\surnameend} \&
  \bibinfo{author}{Achim \surnamestart Jung\surnameend} (\bibinfo{year}{2016}):
  \emph{\bibinfo{title}{Categorical semantics of digital circuits}}.
\newblock In \bibinfo{editor}{Ruzica \surnamestart Piskac\surnameend} \&
  \bibinfo{editor}{Muralidhar \surnamestart Talupur\surnameend}, editors: {\sl
  \bibinfo{booktitle}{Formal Methods in Computer-Aided Design (FMCAD), 2016,
  Mountain View, California, USA}}, pp. \bibinfo{pages}{41--49},
  \doi{10.1109/FMCAD.2016.7886659}.

\bibitemdeclare{inproceedings}{GhicaJL}
\bibitem{GhicaJL}
\bibinfo{author}{Dan~R. \surnamestart Ghica\surnameend}, \bibinfo{author}{Achim
  \surnamestart Jung\surnameend} \& \bibinfo{author}{Aliaume \surnamestart
  Lopez\surnameend} (\bibinfo{year}{2017}): \emph{\bibinfo{title}{{Diagrammatic
  Semantics for Digital Circuits}}}.
\newblock In \bibinfo{editor}{Valentin \surnamestart Goranko\surnameend} \&
  \bibinfo{editor}{Mads \surnamestart Dam\surnameend}, editors: {\sl
  \bibinfo{booktitle}{26th EACSL Annual Conference on Computer Science Logic
  (CSL 2017)}}, {\sl \bibinfo{series}{Leibniz International Proceedings in
  Informatics (LIPIcs)}}~\bibinfo{volume}{82}, \bibinfo{publisher}{Schloss
  Dagstuhl--Leibniz-Zentrum fuer Informatik}, \bibinfo{address}{Dagstuhl,
  Germany}, pp. \bibinfo{pages}{24:1--24:16}, \doi{10.4230/LIPIcs.CSL.2017.24}.

\bibitemdeclare{article}{Hackney2015}
\bibitem{Hackney2015}
\bibinfo{author}{Philip \surnamestart Hackney\surnameend} \&
  \bibinfo{author}{Marcy \surnamestart Robertson\surnameend}
  (\bibinfo{year}{2015}): \emph{\bibinfo{title}{On the Category of Props}}.
\newblock {\sl \bibinfo{journal}{Applied Categorical Structures}}
  \bibinfo{volume}{23}(\bibinfo{number}{4}), pp. \bibinfo{pages}{543--573},
  \doi{10.1007/s10485-014-9369-4}.

\bibitemdeclare{inproceedings}{DBLP:conf/tlca/Hasegawa97}
\bibitem{DBLP:conf/tlca/Hasegawa97}
\bibinfo{author}{Masahito \surnamestart Hasegawa\surnameend}
  (\bibinfo{year}{1997}): \emph{\bibinfo{title}{Recursion from Cyclic Sharing:
  Traced Monoidal Categories and Models of Cyclic Lambda Calculi}}.
\newblock In: {\sl \bibinfo{booktitle}{Typed Lambda Calculi and Applications,
  Third International Conference on Typed Lambda Calculi and Applications,
  {TLCA} '97, Nancy, France, April 2-4, 1997, Proceedings}}, pp.
  \bibinfo{pages}{196--213}, \doi{10.1007/3-540-62688-3\_37}.

\bibitemdeclare{article}{DBLP:journals/neco/HochreiterS97}
\bibitem{DBLP:journals/neco/HochreiterS97}
\bibinfo{author}{Sepp \surnamestart Hochreiter\surnameend} \&
  \bibinfo{author}{J{\"{u}}rgen \surnamestart Schmidhuber\surnameend}
  (\bibinfo{year}{1997}): \emph{\bibinfo{title}{Long Short-Term Memory}}.
\newblock {\sl \bibinfo{journal}{Neural Computation}}
  \bibinfo{volume}{9}(\bibinfo{number}{8}), pp. \bibinfo{pages}{1735--1780},
  \doi{10.1162/neco.1997.9.8.1735}.

\bibitemdeclare{article}{joyal1991geometry}
\bibitem{joyal1991geometry}
\bibinfo{author}{Andr{\'e} \surnamestart Joyal\surnameend} \&
  \bibinfo{author}{Ross \surnamestart Street\surnameend}
  (\bibinfo{year}{1991}): \emph{\bibinfo{title}{The geometry of tensor
  calculus, {I}}}.
\newblock {\sl \bibinfo{journal}{Advances in Mathematics}}
  \bibinfo{volume}{88}(\bibinfo{number}{1}), pp. \bibinfo{pages}{55--112},
  \doi{10.1016/0001-8708(91)90003-P}.

\bibitemdeclare{article}{Kissinger}
\bibitem{Kissinger}
\bibinfo{author}{Aleks \surnamestart Kissinger\surnameend}
  (\bibinfo{year}{2012}): \emph{\bibinfo{title}{Pictures of processes:
  automated graph rewriting for monoidal categories and applications to quantum
  computing}}.
\newblock {\sl \bibinfo{journal}{arXiv preprint arXiv:1203.0202}}.
\newblock \urlprefix\url{https://arxiv.org/abs/1203.0202}.

\bibitemdeclare{article}{rosebrugh2005generic}
\bibitem{rosebrugh2005generic}
\bibinfo{author}{Robert \surnamestart Rosebrugh\surnameend},
  \bibinfo{author}{Nicoletta \surnamestart Sabadini\surnameend} \&
  \bibinfo{author}{RFC \surnamestart Walters\surnameend}
  (\bibinfo{year}{2005}): \emph{\bibinfo{title}{Generic commutative separable
  algebras and cospans of graphs}}.
\newblock {\sl \bibinfo{journal}{Theory and applications of categories}}
  \bibinfo{volume}{15}(\bibinfo{number}{6}), pp. \bibinfo{pages}{164--177}.
\newblock \urlprefix\url{http://www.tac.mta.ca/tac/volumes/15/6/15-06abs.html}.

\bibitemdeclare{article}{DBLP:journals/logcom/SadrzadehCC13}
\bibitem{DBLP:journals/logcom/SadrzadehCC13}
\bibinfo{author}{Mehrnoosh \surnamestart Sadrzadeh\surnameend},
  \bibinfo{author}{Stephen \surnamestart Clark\surnameend} \&
  \bibinfo{author}{Bob \surnamestart Coecke\surnameend} (\bibinfo{year}{2013}):
  \emph{\bibinfo{title}{The {F}robenius anatomy of word meanings {I:} subject
  and object relative pronouns}}.
\newblock {\sl \bibinfo{journal}{J. Log. Comput.}}
  \bibinfo{volume}{23}(\bibinfo{number}{6}), pp. \bibinfo{pages}{1293--1317},
  \doi{10.1093/logcom/ext044}.

\bibitemdeclare{incollection}{selinger2010survey}
\bibitem{selinger2010survey}
\bibinfo{author}{Peter \surnamestart Selinger\surnameend}
  (\bibinfo{year}{2010}): \emph{\bibinfo{title}{A survey of graphical languages
  for monoidal categories}}.
\newblock In: {\sl \bibinfo{booktitle}{New structures for physics}},
  \bibinfo{publisher}{Springer}, pp. \bibinfo{pages}{289--355},
  \doi{10.1007/978-3-642-12821-9\_4}.

\bibitemdeclare{misc}{vicary}
\bibitem{vicary}
\bibinfo{author}{J~\surnamestart Vicary\surnameend}: \bibinfo{note}{Personal
  communication}.

\end{thebibliography}
\end{document}